\documentclass[lettersize,journal]{IEEEtran}
\setlength{\evensidemargin}{-0.375in}
\setlength{\oddsidemargin}{-0.375in}
\setlength{\textwidth}{7.25in}
\setlength{\textheight}{9.25in}

\usepackage{amssymb}
\usepackage{amscd}
\usepackage{amsmath}
\usepackage{epsfig}
\usepackage{graphics}
\graphicspath{ {./Figures/} }
\usepackage{psfrag}
\usepackage{rotating}
\usepackage{amsmath}
\usepackage{amsfonts}
\usepackage{bbm}
\usepackage{url}
\usepackage{color}
\usepackage[caption=false,font=footnotesize]{subfig}
\usepackage{epstopdf}
\usepackage{amsthm}
\usepackage{tikz}
\usetikzlibrary{patterns}
\usepackage{tkz-euclide}
\usepackage{mathtools}
\usepackage[linesnumbered,lined,boxed]{algorithm2e}
\SetArgSty{textnormal}

\usepackage[normalem]{ulem}

\usepackage{siunitx}
\DeclareSIUnit{\belmilliwatt}{Bm}

\usepackage[final]{pdfpages}
\usepackage[inline, shortlabels]{enumitem}
\usepackage{siunitx}
\DeclareSIUnit{\belmilliwatt}{Bm}
\DeclareSIUnit{\belsquaremeter}{Bsm}

\newtheorem{theorem}{Theorem}
\newtheorem{lemma}[theorem]{Lemma}
%\newtheorem{example}{Example}

%[section]
        %Left square bracketontrT)
       %Right square bracket
         %Left bracket
        %Right bracket

\newcommand{\Na}{N_{\rm a}}

\newcommand{\Ptx}{P_{\rm tx}}

%Defining Symbols to be used 

\usepackage{multicol}
\usepackage[nolist, printonlyreused]{acronym}

% math symbols we might want to change later
\newcommand{\uc}{\mathcal{S}^1} % unit circle
 % borel sigma-algebra
\newcommand{\boreluc}{\mathcal{B}^1} % borel sigma-algebra on unit circle
\newcommand{\dif}{\,\text{d}} % for integration

\setlength\unitlength{1mm}

% bb font symbols

\newfont{\bbb}{msbm10 scaled 700}

\newfont{\bb}{msbm10 scaled 1100}
\newcommand{\CC}{\mbox{\bb C}}
\newcommand{\PP}{\mbox{\bb P}}

\newcommand{\ZZ}{\mbox{\bb Z}}

\newcommand{\HH}{\mbox{\bb H}}

\newcommand{\hh}{\mathbbm{h}}

% Vectors

\newcommand{\av}{{\bf a}}

\newcommand{\cv}{{\bf c}}

\newcommand{\sv}{{\bf s}}

\newcommand{\uv}{{\bf u}}
\newcommand{\wv}{{\bf w}}
\newcommand{\vv}{{\bf v}}
\newcommand{\xv}{{\bf x}}
\newcommand{\yv}{{\bf y}}
\newcommand{\zv}{{\bf z}}
\newcommand{\zerov}{{\bf 0}}
\newcommand{\onev}{{\bf 1}}

% Matrices

\newcommand{\Am}{{\bf A}}

\newcommand{\Cm}{{\bf C}}

\newcommand{\Fm}{{\bf F}}

\newcommand{\Hm}{{\bf H}}
\newcommand{\Id}{{\bf I}}

\newcommand{\Sm}{{\bf S}}

\newcommand{\Ym}{{\bf Y}}
\newcommand{\Zm}{{\bf Z}}

% Calligraphic

% \newcommand{\Ac}{{\cal A}}

\newcommand{\Ec}{{\cal E}}
\newcommand{\Fc}{{\cal F}}
\newcommand{\Gc}{{\cal G}}

\newcommand{\Rc}{{\cal R}}

\newcommand{\Tc}{{\cal T}}

% Bold greek letters

\newcommand{\muv}{\hbox{\boldmath$\mu$}}

\newcommand{\omegav}{\hbox{\boldmath$\omega$}}

% Sans Serif small case

\newcommand{\jsf}{{\sf j}}

\newcommand{\psf}{{\sf p}}

% mixed symbols

\newcommand{\diag}{{\hbox{diag}}}

\renewcommand{\vec}{{\rm vec}}

% Markov chain
\usepackage{stmaryrd} % for \mkv 

% Colors

%%%%%%%%%%%%%%%%%%%%%%%%%%%%%%%%%%%%%%%%%%
\usepackage{hyperref}
\hypersetup{
    bookmarks=true,         % show bookmarks bar?
    unicode=false,          % non-Latin characters in AcrobatÕs bookmarks
    pdftoolbar=true,        % show AcrobatÕs toolbar?
    pdfmenubar=true,        % show AcrobatÕs menu?
    pdffitwindow=false,     % window fit to page when opened
    pdfstartview={FitH},    % fits the width of the page to the window
%    pdftitle={My title},    % title
%    pdfauthor={Author},     % author
%    pdfsubject={Subject},   % subject of the document
%    pdfcreator={Creator},   % creator of the document
%    pdfproducer={Producer}, % producer of the document
%    pdfkeywords={keyword1} {key2} {key3}, % list of keywords
    pdfnewwindow=true,      % links in new window
    colorlinks=true,       % false: boxed links; true: colored links
    linkcolor=red,          % color of internal links (change box color with linkbordercolor)
    citecolor=green,        % color of links to bibliography
    filecolor=blue,      % color of file links
    urlcolor=blue           % color of external links
}
%%%%%%%%%%%%%%%%%%%%%%%%%%%%%%%%%%%%%%%%%%%

%\newcommand{\mk}{{\rm -\!o\!-}}

\usepackage{cleveref}
\newcommand{\T}{{\scriptscriptstyle\mathsf{T}}}
\renewcommand{\H}{{\scriptscriptstyle\mathsf{H}}}

\newcommand{\jim}{\jsf}

\newcommand\copyrighttext{%
  \footnotesize \textcopyright 2024 IEEE. Personal use of this material is permitted. Permission from IEEE must be obtained for all other uses, in any current or future media, including reprinting/republishing this material for advertising or promotional purposes, creating new collective works, for resale or redistribution to servers or lists, or reuse of any copyrighted component of this work in other works.
}
\newcommand\copyrightnotice{%
\begin{tikzpicture}[remember picture,overlay]
\node[anchor=south,yshift=10pt] at (current page.south) {\fbox{\parbox{\dimexpr\textwidth-\fboxsep-\fboxrule\relax}{\copyrighttext}}};
\end{tikzpicture}%
}

\begin{document}

    \begin{acronym}
        \acro{BS}{base station}
        \acro{UE}{user equipment}
        \acro{BT}{beam tracking}
        \acro{mmWave}{millimeter wave}
        \acro{LOS}{line of sight}
        \acro{NLOS}{non-LOS}
        \acro{AoA}{angle of arrival}
        \acro{AoD}{angle of departure}
        \acro{ULA}{uniform linear array}
        \acro{OFDM}{orthogonal frequency division multiplexing}
        \acro{HMM}{hidden Markov model}
        \acro{AWGN}{additive white Gaussian noise}
        \acro{CDF}{cummulative distribution function}
        \acro{ISAC}{integrated sensing and communication}
        \acro{RF}{radio frequency}
        \acro{MIMO}{multiple-input multiple-output}
        \acro{CP}{cyclic prefix}
        \acro{DL}{downlink}
        \acro{FD}{fully digital}
        \acro{FC}{fully connected}
        \acro{HDA}{hybrid digital analog}
        \acro{OSPS}{one-stream-per-subarray}
        \acro{TDMA}{time division multiple access}
        \acro{TDM}{time division multiplexing}
        \acro{DPSS}{discrete prolate spheroidal sequence}
        \acro{SNR}{signal to noise ratio}
        \acro{SINR}{signal to interference and noise ratio}
        \acro{HMM}{hidden Markov model}
        \acro{CDF}{cumulative distribution function}
        \acro{CSI}{channel state information}
        \acro{V2I}{vehicle-to-infrastructure}
        \acro{MU-MIMO}{multi-user MIMO}
        \acro{ADC}{analog-to-digital converter}
        \acro{CFO}{carrier frequency offset}
        \acro{DFT}{discrete Fourier transform}
        \acro{MMV}{multiple measurement vector}
        \acro{BA}{beam alignment}
        \acro{ISI}{intersymbol interference}
        \acro{ICI}{intercarrier interference}
        \acro{DFT}{discrete Fourier transform}
        \acro{IDFT}{inverse DFT}
        \acro{D2D}{device-to-device}
        \acro{ISTA}{iterative soft-thresholding algorithm}
        \acro{CS}{compressed sensing}
        \acro{MFB}{matched filter bound}
        \acro{FDE}{frequency domain equalization}
        \acro{ZF}{zero forcing}
        \acro{MMSE}{minimum mean square error}
        \acro{PAPR}{peak to average power ratio}
        \acro{MAE}{mean absolute error}
    \end{acronym}

    \author{Fernando Pedraza, Jan Christian Hauffen, Fabian Jaensch, Shuangyang Li, and Giuseppe Caire\\ 
    Communications and Information Theory, Technical University of Berlin, Germany \\
    Emails: \{f.pedrazanieto, j.hauffen, f.jaensch, shuangyang.li, caire\}@tu-berlin.de
    }

    \title{Distributed Beam Alignment in sub-THz D2D Networks}
    \maketitle
    \copyrightnotice
    
    \begin{abstract}
        Devices in a \ac{D2D} network operating in sub-THz frequencies require knowledge of the spatial channel that connects them to their peers. Acquiring such high dimensional channel state information entails large overhead, which drastically increases with the number of network devices. In this paper, we propose an accelerated method to achieve network-wide beam alignment in an efficient way. To this aim, we consider \ac{CS} estimation enabled by a novel design of pilot sequences. Our designed pilots have constant envelope to alleviate hardware requirements at the transmitters, while they exhibit a ``comb-like'' spectrum that flexibly allocates energy only on certain frequencies. This design enables multiple devices to transmit thier pilots concurrently while remaining orthogonal in frequency, achieving simultaneous alignment of multiple devices. Furthermore, we present a sequential partitioning strategy into transmitters and receivers that results in logarithmic scaling of the overhead with the number of devices, as opposed to the conventional linear scaling. Finally, we show via accurate modeling of the indoor propagation environment and ray tracing simulations that the resulting sub-THz channels after successful beamforming are approximately frequency flat, therefore suitable for efficient single carrier transmission without equalization. We compare our results against an "802.11ad inspired" baseline and show that our method is capable to greatly reduce the number of pilots required to achieve network-wide alignment. 
    \end{abstract}

    \begin{IEEEkeywords}
	D2D, sub-THz, Beam Alignment, Pilot Design, Compressed Sesing.
    \end{IEEEkeywords}

    \section{Introduction}

The explosive growth of wireless communication applications has driven the demand for higher data rates and greater network capacity. To meet these requirements, \ac{mmWave} and sub-THz frequencies, particularly within the 30-300 \si{\giga\hertz} range, have emerged as a promising solution for indoor applications that require extremely high rates (e.g., supporting VR gargles, 8K TV sets, and in some specific industrial campus applications) \cite{ThzCommSurvey}. Specifically, the IEEE 802.11ad standard \cite{80211adAmendment}, operating at \SI{60}{\giga\hertz}, demonstrates the potential of high frequency communication systems to achieve multi-gigabit per second data rates in indoor local area networks.

However, the deployment of sub-THz communication introduces significant challenges due to the inherent propagation characteristics of high-frequency signals \cite{Maltsev2010statistical}. At these frequencies, the propagation losses are substantially higher compared to those at the lower microwave range. This necessitates the use of highly directional antennas and beamforming techniques to compensate for path loss and maintain the link budget. Beamforming not only helps in overcoming high propagation losses but also plays a critical role in minimizing interference in dense \ac{D2D} networks \cite{GameBasedBeamScheduling}, where spatial reuse is essential for efficient spectrum utilization.

In this context, \ac{BA} emerges as a crucial process in sub-THz networks \cite{InitialAccessmmWaveTHz}. Efficient \ac{BA} ensures that the transmitter and receiver beams are accurately aligned, maximizing the signal strength and minimizing the probability of link failure. However, the \ac{BA} process introduces additional overhead, which can negatively impact the  overall performance of the network, especially in dynamic indoor environments where devices may frequently change their positions. Therefore, it is imperative that the \ac{BA} procedure is not only accurate but also fast, to reduce the overhead and maintain high data rates. The sparse nature of sub-THz channels, which are often composed of a dominant \ac{LOS} path and a limited number of strong reflections, offers an opportunity to optimize this alignment process using \ac{CS} techniques \cite{CompressedSensing}. \ac{CS} leverages the sparse nature of the sub-THz channel, allowing the \ac{BA} process to be performed more efficiently by reducing the number of measurements needed to identify the optimal beam directions. This approach is particularly beneficial in scenarios where rapid alignment is necessary, such as in mobile or dense indoor environments, where frequent realignments may be required due to device mobility or environmental changes. 

\ac{CS} techniques have garnered considerable attention for their ability to reduce the overhead associated with \ac{BA} in \ac{mmWave} and sub-THz systems. The seminal work \cite{HeathAlkhateeb} proposed a \ac{CS}-based approach that capitalizes on the sparsity of the \ac{mmWave} channel. Their method effectively reduces the number of measurements needed to identify the optimal beam directions, thereby minimizing the overhead of the \ac{BA} process. Building on this foundation, \cite{PeakFindingBA} introduced another \ac{CS}-based method for initial access in high frequency systems using analog arrays. Their approach employs gradient-type methods and relies on a codebook-based beamforming strategy. The authors argue that designing precoders and combiners on-the-fly would be much slower compared to a codebook-based approach, which motivates us to consider codebook-based beamforming in our work. Their method hinges on the assumption that the channel gain after beamforming is, at least locally, a monomodal function of the beam index (i.e. the channel gain after beamforming always decreases when the difference between the index of a beam and the index of the optimal beam is made larger). However, this assumption may not hold in indoor scenarios, where a larger number of reflections can disrupt the monotonic relationship, leading to potential performance degradation in such environments. 

Beyond the theoretical work on \ac{CS}-based \ac{BA}, several works have experimentally demonstrated their practical feasibility. In particular, the authors of \cite{CompressiveBAExperimentalSetup} implemented their proposed method on an off-the-shelf 802.11ad compliant router with 32 antenna elements, achieving successful performance. Similarly, \cite{CNN_BA_P2P} proposes a convolutional neural network based algorithm and test it on different hardware testbeds at the \SI{60}{\giga\hertz} band.

Most works in the high frequency \ac{D2D} literature assume partial \cite{D2DBeamAlignmentError, SpatioTempD2DMisalignment} of full \cite{D2DmmWaveAndMicrowave, D2DCoalitionFormation, GBLinks} \ac{CSI} knowledge is available to a centralized controller. Then, an approximate solution to the problem of optimal resource allocation with directional links is searched for, by means of graph neural networks \cite{GBLinks} and other heuristic approaches \cite{D2DCoalitionFormation, D2DmmWaveAndMicrowave}. The information assumed known to the controller consists of the  instantaneous channel matrix (often of large dimension when considering sub-THz systems) between every pair of devices. In practice, due to the large number of parameters to be estimated and the short timescale of variation of small scale fading, the \ac{CSI} might be completely obsolete by the time it is available to the controller. Another widely researched topic analyzes the effect of partial or imprecise knowledge \cite{D2DBeamAlignmentError, SpatioTempD2DMisalignment}, and demonstrates the detrimental effects of suboptimal beam pointing at the network level. While all those works are relevant for sub-THz \ac{D2D} network research, we specifically focus on efficient estimation of \ac{CSI} that forms the base for resource management in upper layers. Furthermore, our efforts are aimed at estimating the long-term statistics of the channels between every pair of devices, which remain constant in a much larger time scale under the assumption of limited mobility. Other research activities on directional \ac{D2D} networks have focused on dynamic beam tracking \cite{LiSteer, MultiLobeD2DTracking, ABPD2D, UAV_D2D_Tracking}, in order to reduce the frequency of full alignment rounds. Solutions consist of nonstandalone systems using light sensors \cite{LiSteer}, variations of classical monopulse radar tracking \cite{MultiLobeD2DTracking}, or heuristic approaches that periodically explore neighboring directions to detect when beam adaptation is necessary \cite{ABPD2D, UAV_D2D_Tracking}. However, all those works assume initial knowledge of the relevant \ac{CSI} is available and aim to dynamically update the estimates to prevent the loss of performance caused by beam misalignments and to minimize channel estimation overhead. We remark once again that our work deals with the acquisition of network-wide \ac{CSI} which is required as a starting point for the aforementioned works.

The most similar work to ours can be found in \cite{HanzoCoordtinatedMACD2D}, where the authors consider a similar setup, but focus on a  higher layer approach as opposed to our physical layer methods. 
Furthermore, they differentiate between access points, which are connected to each other through wired links, and users. Instead, we focus on purely \ac{D2D} scenarios without any infrastructure. To the best of our knowledge, the problem of network-wide \ac{BA} in a \ac{D2D} network has not been investigated yet. 

This paper describes a new method based on \ac{CS} techniques and a novel pilot design for fast beam alignment in sub-THz \ac{D2D} networks assuming no side information. Specifically, our main contributions can be summarized as follows:

\begin{itemize}
    \item We derive a model for the received signal model that allows us to cast \ac{BA} as an instance of the \ac{MMV} problem in \ac{CS}, which can then be solved using conventional techniques.
    \item We propose a pilot signal design that flexibly spreads the energy among a desired number of discrete frequencies while keeping a constant envelope in time domain, thereby remaining energy efficient. This pilot design is independent of any processing in the spatial domain and can therefore be applied to any scenario, not only sub-THz networks with beamforming capable devices, thus having independent merit.
    \item Based on the proposed pilot design, we propose an accelerated method for network-wide \ac{BA} where multiple devices simultaneously transmit on orthogonal frequency sets. In this way, the number of pilot transmissions reduces from $K$ to $\lceil\log_{2}K\rceil$, where $K$ indicates the total number of devices in the network.
    \item We study the feasibility of single carrier communication after beam alignment by evaluating the achievable performance under different receiver conditions. 
    \item Our unified framework includes, as a special case, also some forms of beam sweeping closer, in the operation principle, to 802.11ad. We take such scheme as the baseline against which to compare the more advanced \ac{CS}-based scheme, and show important performance gains.
\end{itemize}

\begin{figure*}
    \centering
    \subfloat{\resizebox{!}{0.22\textheight}{\input{Figures/D2DNetworkExample}}}
    \qquad
    \subfloat{\includegraphics[width=0.4\linewidth]{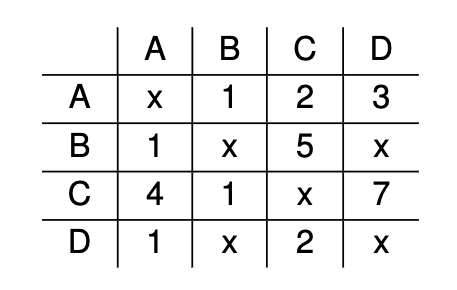}}
    \caption{Example of D2D network and final objective. The left side of the picture shows a network with four devices and some example beams from their respective codebooks, labeled by their indices. On the right, the desired end result, where each device is aware of the beam index needed to communicate with every device in the network.}
    \label{fig:D2D_net_and_table}
\end{figure*}

\subsection{Paper organization and notation}
The rest of this paper is organized as follows: Section \ref{sec:sys_model} describes the signal model in time and frequency domain. We derive the \ac{MMV} formulation in Section \ref{sec:SingleBA}, and we describe the design of pilot sequences in Section \ref{sssec:Seq_gen}. Section \ref{sec:NetworkBA} extends the methods of Section \ref{sec:SingleBA} to a network scenario to achieve logarithmic scaling of the total beam alignment time with the number of devices. Section \ref{sec:CommPhase} describes how to apply the results of \ac{BA} to communicate data between pairs of devices, and studies the feasibility of single carrier transmission without equalization. The proposed methods are evaluated and their performance presented in Section \ref{sec:Results}, where we also proposed an 802.11ad inspired baseline. Section \ref{sec:Conclusions} concludes the work and discusses potential future work topics.

Throughout the document, we consider the following notation. We denote scalars, matrices and vectors with $a$, $\Am$ and $\av$ respectively. $\Am^\T$, $\Am^\H$, $\Am^{\rm c}$, $[\Am]_{i, j}$ and $\Am_{i, :}$ indicate respectively the transpose, Hermitian transpose, element-wise conjugate, the $(i, j)$-th element, and the $i$-th row. $\|\av\|_{2}$ and $[\av]_{i}$ are respectively the $\ell_{2}$ norm of vector $\av$ and its $i$-th entry. Symbols $\hh$ and $\HH$ are reserved for the flattened version of a matrix and a matrix whose columns are flattened matrices. For a signal $\xv$ in discrete-time domain, we indicate its \ac{DFT} by $\tilde{\xv}$, and for a beamforming vector $\uv$ we indicate its transformation to a beamspace domain by $\check{\uv}$. We abbreviate the index set $\{1, \dots, M\}$ as $[M]$. The imaginary unit is denoted as $\jim \coloneqq \sqrt{-1}$. $\onev_{\Na}$ is the $\Na$ dimensional vector of all ones.
    \section{System Model}\label{sec:sys_model}

We consider an indoor local network scenario where $K$ multi-antenna devices wish to communicate with each other using directional beams. In particular, for simplicity, we let all devices be equipped with a \ac{ULA} of $N_{\rm a}$ antennas driven by a single \ac{RF} chain. Prior to any data communication, each device needs to know the beam configuration to use in order to interact with any other node in the network. Focusing on devices with beam codebooks of finite cardinality, the goal is to achieve \textit{network beam alignment} as exemplified in Fig.~\ref{fig:D2D_net_and_table}, where each device is aware of the beam index needed to establish any communication link. 
We begin by presenting the channel, signal model, and receive processing for any arbitrary pair of devices, and defer the discussion on network-wide operation to Section \ref{sec:NetworkBA} for the sake of clarity of exposition.

\subsection{Channel Model}

Given the slowly varying nature of the indoor channels considered, we approximate their impulse response as locally time-invariant. Within this setup, we describe the channel impulse response between the $j$-th antenna at the transmitter and the $i$-th antenna at the receiver by
\begin{align}\label{eq:Channel_NF}
    [\Hm(\tau)]_{i, j} = \sum_{p=0}^{P}\beta_{p}\delta(\tau - \tau^{i, j}_{p}),
\end{align}
where $i, j\in [N_{\rm a}]$, $p=0$ indicates the \ac{LOS} path and all other terms in the sum are \ac{NLOS}, and $\tau^{i, j}_{p}$ indicates the propagation delay between the $j$-th and $t$-th antenna elements at the transmitter and the receiver respectively when traversing the $p$-th path and $\beta_{p}$ is the associated path gain. For this work, we directly obtain values of $\tau^{i, j}_{p}$ and $\beta_{p}$ from the state of the art ray tracing simulator Wireless InSite \cite{InSite}\footnote{Notice that, since the channel is given in terms of paths and delays from antenna element to antenna element, no specific far field assumption between the devices arrays is made, and our model automatically includes far field or near field conditions, depending on the relative position of the devices.}.

\subsection{Signal Model}\label{ssec:sig_model}
Consider a sequence of symbols $\{s[n]\}_{n=0}^{M-1}$, to be transmitted at symbol rate $T$ by means of a pulse shaping filter $g_{\rm tx}(t)$, resulting in the time domain signal
\begin{align}
    s(t) = \sum_{n=0}^{M-1}s[n]g_{\rm tx}(t-nT),
\end{align}
where the pulse shaping filter has unit energy ($\int_{-\infty}^{\infty}|g_{\rm tx}(t)|^{2}\,dt = 1$) and the sequence $\{s[n]\}_{n=0}^{M-1}$ satisfies $\sum_{n=0}^{M-1}|s[n]|^{2} = \Ec$, where the term $\Ec$ indicates the total energy of the sequence. The associated average transmit power is given by $P_{\rm tx} \coloneqq \Ec/(MT)$.

If the transmitter applies a (unit norm) beamforming vector $\vv\in\CC^{N_{\rm a}}$, the signal at the receiving device is given by
\begin{align}
    \yv(t) &= (\Hm \ast \vv s)(t) + \zv(t)
\end{align}
where $\ast$ indicates linear convolution and $\zv(t)$ is an \ac{AWGN} process. The receiver applies a (unit norm) combining vector $\uv\in\CC^{N_{\rm a}}$ and a filter $g_{\rm rx}(t)$ before sampling the signal at rate $1/T$ \footnote{In this work, we assume that sampling frequency and carrier frequency offsets have been previously estimated and compensated with conventional methods such as the one described in \cite{CFO_sync_classic} or in case of multiple devices the distributed method in \cite{TO_CFO_DiMIMO}.} with the first sample being taken at $t=\tau_{0}$. The sampled signal is thus given by 

\begin{align}
    y[n'] &= \left[g_{\rm rx} \ast \uv^\H \yv \right](t)\Big|_{t=n'T+\tau_{0}}\nonumber\\
    &= \uv^\H\left[g_{\rm rx} \ast \Hm \ast \vv s\right](t) + \uv^\H\zv'(t)\Big|_{t=n'T+\tau_{0}}\nonumber\\
    &= \sum_{n=0}^{M-1}s[n]\uv^\H\left[g_{\rm rx} \ast \Hm \ast g_{\rm tx}\right](t)\vv\Big|_{t=n'T+\tau_{0}} + z'[n'] \nonumber\\
    &= \sum_{n=0}^{M-1}s[n]\uv^\H\overline{\Hm}[n'-n]\vv+ z'[n']\label{eq:DT_IO},
\end{align}
where $\overline{\Hm}[n'] \coloneqq [g_{\rm rx} \ast \Hm \ast g_{\rm rx}](t)\Big|_{t=n'T + \tau_{0}}$ is the discrete-time equivalent channel capturing pulse shaping and timing offsets and $z'[n']$ is a discrete time \ac{AWGN} process. Notice that $[\overline{\Hm}[n']]_{i, j} \neq 0$ only if $n'T + \tau_{0} \approx \tau^{i, j}_{p}$ for some path $p$. We define the channel length $L$ as $L \coloneqq \min\{n \mid \overline{\Hm}[n'] = 0\; \forall n'\geq n\}$\footnote{In practice, in a network with many devices, $L$ should be the maximum of the channel lengths between any pair of devices.}. For a given channel length, letting the receiver take $M+L-1$ samples, \eqref{eq:DT_IO} can be rewritten in vector form as
\begin{align}\label{eq:TD_IO_vec}
    \yv = \Sm[\overline{\hh}[0], \dots, \overline{\hh}[L-1]]^\T \wv + \zv',
\end{align} 
where $\overline{\hh}[n] \coloneqq \vec(\overline{\Hm}[n]) \in \CC^{N_{\rm a}^2}$, and we defined the combined transmit-receive beamforming vector $\wv \coloneqq \vv \otimes \uv^{\rm c} \in \CC^{N_{\rm a}^2}$, and $\Sm\in\CC^{(M+L-1)\times L}$ is the Toeplitz matrix
\begin{align}\label{eq:S_k}
\Sm \coloneqq 
    \begin{bmatrix}
        s[0] & 0 & \dots & 0 \\
        s[1] & s[0] & \dots & \vdots \\
        \vdots & \vdots &  & s[0]\\
        s[M-1] & s[M-2] & \ddots & \vdots\\
        0 & s[M-1] & & \vdots\\
        \vdots & \vdots & \ddots & \vdots \\
        0 & 0 & \dots & s[M-1]
    \end{bmatrix}.
\end{align}

\subsection{Equivalent Frequency Domain Model}
The channel model \eqref{eq:S_k} allows us to operate with time domain sequences, which is useful at the transmitter e.g. to ensure the transmitted signal has low \ac{PAPR} and thereby alleviate hardware requirements in end devices. However, a frequency domain model facilitates receive processing, by transforming a time dispersive channel into a set of parallel independent channels, as it is known from \ac{OFDM}.
To this aim, we need to prepend the sequence $s[n]$ with a \ac{CP} as done in the uplink of LTE (see e.g. \cite{LTE_Uplink}). For a \ac{CP} length $L' \geq L$, and assuming the propagation delay of the \ac{LOS} path is exactly $\tau_{0}$\footnote{This is just done for notation simplicity. As long as we take a window of $M$ samples starting anywhere from $\min_{i,j} \tau_{0}^{i, j} + LT$ and $\min_{i,j} \tau_{0}^{i, j} + L'T$, the following derivations hold up to a cyclic rotation, which would only introduce some irrelevant phase factors in the subsequent analysis.} the $M$ samples corresponding to elements $n'=L', \dots L'+M-1$ of $\yv$ in \eqref{eq:TD_IO_vec} are given by
\begin{align}\label{eq:TD_IO_vec_CP}
    \mathring{\yv} = \mathring{\Sm}[\overline{\hh}[0], \dots, \overline{\hh}[L-1], \zerov, \dots, \zerov]^\T \wv + \zv',
\end{align}
where $\mathring{\Sm} \in \CC^{M\times M}$ is the circulant matrix
\begin{align}
    \mathring{\Sm} \coloneqq 
    \begin{bmatrix}
        s[0] & s[M-1]& \dots& s[1]\\
        s[1] & s[0]& \dots& s[2]\\
        \vdots & \vdots & \ddots & \vdots \\
        s[M-1] & s[M-2] & \dots &s[0]
    \end{bmatrix},
\end{align}
which can be diagonalized by a \ac{DFT} basis as
\begin{align}
    \mathring{\Sm} = \frac{1}{M}\Fm_{M}^\H \diag(\tilde{s}[0], \dots, \tilde{s}[M-1])\Fm_{M},
\end{align}
where $\Fm_{M}$ is the $M\times M$ \ac{DFT} matrix with entries $[\Fm]_{i, j} = e^{-\jim\frac{2\pi}{M}(i-1)(j-1)}$ for $i, j \in [M]$ satisfying $\frac{1}{M} \Fm_{M}\Fm_{M}^\H = \Id_{M}$, where $\Id_{M}$ is the identity matrix of rank $M$, and $\tilde{s}[m]$ is the $m$-th element of the $M$-point \ac{DFT} of $[s[0], \dots, s[M-1]]$. 
By applying a (normalized) \ac{DFT} to the measurement \eqref{eq:TD_IO_vec}, we obtain
\begin{align}
    \tilde{\yv} &= \frac{1}{\sqrt{M}}\Fm_{M}\mathring{\yv} \nonumber\\
    &=\frac{1}{\sqrt{M}}\Fm_{M}\mathring{\Sm}[\overline{\hh}[0], \dots, \overline{\hh}[L-1], \zerov, \dots, \zerov]^\T \wv + \tilde{\zv}' \nonumber\\
    &=\diag\left(\frac{\tilde{s}[0]}{\sqrt{M}}, \dots, \frac{\tilde{s}[M-1]}{\sqrt{M}}\right)\tilde{\HH}^\T \wv + \tilde{\zv}'\label{eq:FD_IO},
\end{align}
where we defined $\tilde{\HH}^\T$ as $[\tilde{\overline{\hh}}[0], \dots, \tilde{\overline{\hh}}[M-1]] \in \CC^{N_{\rm a}^{2}\times M}$ as the frequency domain channel matrix, with $\tilde{\overline{\hh}}[m] \in \CC^{N_{\rm a}^2}$ indicating the vectorized channel at frequency index $m$, and where $\tilde{\zv}'$ has the same statistics as $\zv'$ since the transformation was unitary.

    \section{Single device beam alignment}\label{sec:SingleBA}

We describe here a procedure to find the best beams in a codebook connecting a transmitting and a receiving device. In Section \ref{sec:NetworkBA}, we describe how to extend the method here to allow for multiple devices to simultaneously align.

First, we derive an equivalent version of the measurement model in \eqref{eq:FD_IO} in the transformed domain induced by the beam codebooks. For simplicity, we assume that all devices use the same codebook $\Cm = [\cv_{1}, \dots, \cv_{N_{\rm a}}]$, given by the normalized \ac{DFT} matrix of dimension $N_{\rm a}$, i.e. $\Cm = 1/\sqrt{N_{\rm a}}\Fm_{N_{\rm a}}$\footnote{We focus on a specific configuration to simplify exposition, but all the derivations hold for any codebook with approximately orthogonal columns on which channels have a sparse representation.}. We assume that a device can generate a beam by linearly combining entries of the codebook, i.e. 
\begin{align}
    \vv = \Cm\check{\vv},
\end{align} 
where we refer to $\check{\vv}$ as the \textit{beam weighting vector}. A specific design for the beam weighting vectors is presented at the end of this section. The combined transmit-receive beamforming vector introduced in Section \ref{ssec:sig_model} can then be rewritten as
\begin{align}
    \wv = \vv \otimes \uv^{\rm c} = (\Cm\check{\vv}) \otimes (\Cm\check{\uv})^{\rm c}.
\end{align}
The $m$-th entry of $\tilde{\yv}_{r, k}$ in in \eqref{eq:FD_IO}, corresponding to the $m$-th frequency index, is then given by
\begin{align}
    [\tilde{\yv}]_{m} &= \frac{\tilde{s}[m]}{\sqrt{M}}\vec(\tilde{\Hm}[m])^\T[(\Cm\check{\vv}) \otimes (\Cm^{\rm c}\check{\uv}^{\rm c})] + [\tilde{\zv}']_{m}\nonumber\\
    &=\frac{\tilde{s}[m]}{\sqrt{M}}[(\Cm\otimes \Cm^{\rm c})(\check{\vv}\otimes\check{\uv}^{\rm c})]^\T\vec(\tilde{\Hm}[m]) + [\zv']_{m}\nonumber\\
    &=\frac{\tilde{s}[m]}{\sqrt{M}} (\check{\vv}^\T \otimes \check{\uv}^\H) \vec(\Cm^\H \tilde{\Hm}[m] \Cm) + [\zv']_{m}\nonumber\\
    &= \frac{\tilde{s}[m]}{\sqrt{M}}(\check{\vv}^\T \otimes \check{\uv}^\H) \check{\hh}[m] + [\zv']_{m},
\end{align}
where we defined $\check{\hh}[m] \coloneqq \vec(\Cm^\H \tilde{\Hm}[m] \Cm)$, which represents the vectorized channel matrix at frequency index $m$ on the beamspace induced by codebook $\Cm$. The complete measurement in \eqref{eq:FD_IO} can then be expressed as 
\begin{align}\label{eq:FD_IO_BS}
    \tilde{\yv}^\T &= \frac{1}{\sqrt{M}}(\check{\vv}^\T \otimes \check{\uv}^\H)[\check{\hh}[0], \dots, \check{\hh}[M-1]]\diag(\tilde{\sv}) + \zv'^\T \nonumber\\
    &=\frac{1}{\sqrt{M}}(\check{\vv}^\T \otimes \check{\uv}^\H)\check{\HH}\diag(\tilde{\sv}) + \zv'^\T,
\end{align}
where $\tilde{\sv} \coloneqq [\tilde{s}[0], \dots, \tilde{s}[M-1]] \in \CC^{M}$ and we defined $\check{\HH} \coloneqq [\check{\hh}[0], \dots, \check{\hh}[M-1]]$. Notice that each row of $\check{\HH}$ corresponds to the frequency response of the channel resulting from a transmit-receive beam pair combination. For the sub-THz channels considered here, the matrix $\check{\HH}$ presents approximate row sparsity, i.e. only a few rows of $\check{\HH}$ will carry significant gain, corresponding to pairs of beams that align the transmitter and the receiver. Therefore, we can now formulate the beam alignment problem as the estimation of the set of indices corresponding to the rows with largest norm, referred to as the row support of $\check{\HH}$.

In order to do this, we now collect $Q$ measurements as in \eqref{eq:FD_IO_BS}, so we introduce a new index $q$ representing each pilot sequence transmission, where the transmitter repeats the same sequence $\{s[n]\}$, whereas the transmit and receive beamforming vectors at the $q$-th measurement are now given by $\vv_{q}$ and $\uv_{q}$ respectively. As in \cite{Xiaoshen2018scalable,Xiaoshen2019efficient}, it is assumed that the receiver is aware of the sequence of beam weighting vectors used by the transmitter, e.g. as part of the specification of a standard. The complete measurement block of dimension $M\times Q$ is given by 
\begin{align}\label{eq:MMV}
    \Ym = \frac{1}{\Na\sqrt{M}}\Am\check{\HH}\diag(\tilde{\sv}) + \Zm = \frac{1}{\Na\sqrt{M}}\Am\check{\HH}' + \Zm,
\end{align}
where $\Am \coloneqq \Na[(\hat{\vv}_{1}^\T\otimes \hat{\uv}_{1}^\H)^\T, \dots, (\hat{\vv}_{Q}^\T \otimes \hat{\uv}_{Q}^\H)^\T]^\T \in \CC^{Q\times N_{\rm a}^2}$. Notice that $\check{\HH}' \coloneqq \check{\HH}\diag(\tilde{\sv})$ preserves the row sparse structure of $\check{\HH}$.

Given the structure of the measurement model \eqref{eq:MMV}, we can apply techniques from the \ac{CS} literature \cite{SparSA} to the recovery of the row support of $\check{\HH}'$. Compressed sensing techniques allow us to solve the problem with a number of pilots $Q \ll N_{\rm a}^{2}$. In particular, we can formulate the \ac{MMV} problem
\begin{align}
    \underset{\check{\HH}'}{\rm minimize}\qquad \left\|\Ym - \frac{1}{\Na\sqrt{M}}\Am\check{\HH}'\right\|_2^2 + \gamma\|\check{\HH}'\|_{2,1}, \label{eq:cs_mmv}
\end{align}
where $\|\check{\HH}'\|_{2,1} = \sum_{m=0}^{M-1}\|\check{\hh}[m]\tilde{s}[m]\|_2$ is the $\ell_{2,1}$-norm and $\gamma$ is a regularization parameter trading off reconstruction error and sparsity of the solution. In this work, we solve problem \eqref{eq:cs_mmv} by applying block \ac{ISTA}, which is a well known proximal gradient method \cite{SparSA}. Notice that more efficient solvers for problem \eqref{eq:cs_mmv} have been proposed, but optimization of the specific solver is out of scope for this work. We remark that the estimate yields both the optimal transmit and receive beam indices, while all processing is done at the receiver.

Most of the \ac{CS} literature has considered sampling matrices $\Am$ with random i.i.d. entries sampled from different distributions. However, due to the highly structured nature of our sampling matrix, most choices of the beam weighting vectors result in a sampling matrix with non independent entries. Nevertheless, we show in Appendix \ref{ap:SamplingMatrix} that 
when we sample each entry of $\check{\vv}_{q}$ (identically $\check{\uv}_{q}$) for all $q$ uniformly from the unit circle, the entries of vector $\check{\vv}_{q}^\T \otimes \check{\uv}_{q}^\H$ are i.i.d.. Since $\check{\uv}_{q}$ and $\check{\vv}_{q}$ must have unit norm, their entries are sampled uniformly from the circle of radius $1\sqrt{\Na}$.

Interestingly, we observe that the presence of the $\diag(\tilde{\sv}_{k})$ term allows us to flexibly trade off the number of useful measurements and the \ac{SNR} per measurement. We dedicate the next section to propose a sequence design method that enables us to take advantage of this observation.

    \section{Pilot Sequence Design}\label{sssec:Seq_gen}
We begin this section by providing a high level motivation for the design presented here. For any given measurement in the form of \eqref{eq:MMV}, assume that the pilot sequence $\{s[n]\}_{n=0}^{M-1}$ is designed such that its total energy $\Ec$ is spread uniformly across the whole bandwidth, such that $|\tilde{s}[m]|^{2} = \Ec/M$ for $m=0,\dots, M-1$. Then, each of the measurements consists of $M$ samples whose \ac{SNR} is proportional to $\Ec/M$. On the other extreme, assume that the sequence is designed such that all the energy is concentrated in frequency index $m_{1}$, such that $|\tilde{s}[m]|^{2} = \Ec$ if $m=m_{1}$ and is 0 otherwise. Then, each measurement has a single relevant sample, whose \ac{SNR} is proportional to $\Ec$. Therefore, we identify a trade off between measurement dimension and \ac{SNR} per sample. Notice that this also affects the dimensions of problem \eqref{eq:cs_mmv}, such that the solver complexity is reduced when the energy is concentrated among a small subset of frequencies.

We dedicate this section to introduce a pilot sequence design that allows us to flexibly allocate energy on a subset of frequencies to study the practical implications of this trade off. To remain energy efficient, we further require the resulting pilots to have constant envelope in the time domain.
Finally, in order to exploit frequency diversity in the channel and prevent \ac{ICI} from tightly spaced frequencies under imperfect synchronization, we design the sequences such that the active frequencies are maximally spaced. The method is inspired by the works in \cite{FALP} and \cite{Unimod_seq_design}, where a similar strategy is used to concentrate energy among a small set of spatial sectors while using beamforming vectors with constant magnitude. We remark that the pilot design is independent of the beamforming strategy, and can be used in general setups including single antenna networks.

\begin{figure*}[!t]
    \centering
    \resizebox{\linewidth}{!}{
        \input{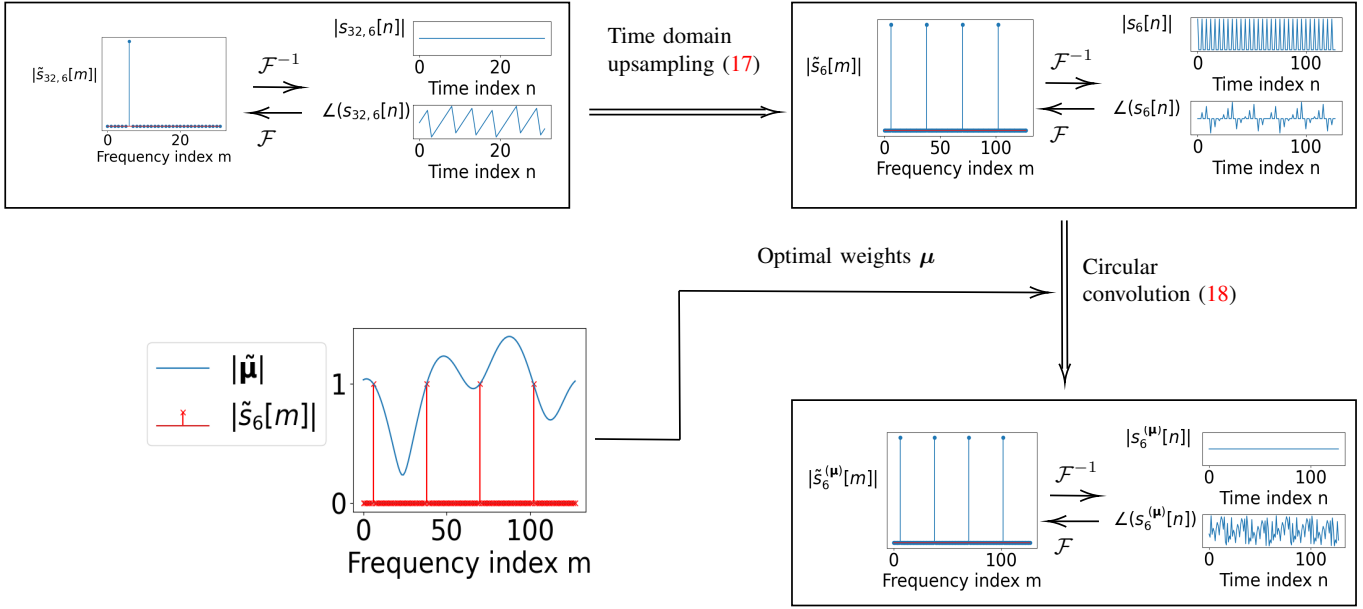}
    }
    \caption{Example of the proposed pilot sequence design method, for $M=128$, $M_{\rm s}=4$ and $k=6$. The top left subfigure shows the starting point, the $k$-th element of the \ac{DFT} basis of dimension $M_{\rm s}$. Upsampling in time domain, the spectrum is replicated and scaled as shown in the top right, but in the time domain this results in an on-off sequence, which is undesirable for energy efficient operation. Finding the optimal weights $\muv$ and convolving the upsampled signal with them, we obtain the desired pilot in the bottom right, with energy distributed only among $M_{\rm s}$ frequency indices and constant envelope in time domain.}
    \label{fig:SeqGenDiagram}
\end{figure*}

Let $M_{\rm s}$ be the desired number of frequencies among which to spread the total energy. For simplicity, we assume that $M_{\rm s}$ evenly divides the total sequence length $M$ and let $\eta \coloneqq M/M_{\rm s}$. For any finite length sequence $s[n]$, we define the circular shift operator as $[{\rm cshift}_{\eta}(s[n], k)]_{m} = s[(m-k) \mod \eta]$ for $m\in\{0, \dots, \eta -1\}$. We begin from the $\eta$-sample sequence $\tilde{s}_{\eta, k}[m] = {\rm cshift}_{\eta}([1, 0, \dots, 0], k)$ for any sequence index $k\in\{0, \dots, \eta-1 \}$. We interpret $\tilde{s}_{\eta, k}[m]$ as the frequency domain representation of the $\eta$-sample sequence $s_{\eta, k}[n] \coloneqq {\rm IDFT}(\tilde{s}_{\eta, k}[m], \eta)$, where ${\rm IDFT}(s[m], \eta)$ is the $\eta$-point \ac{IDFT} of $s[m]$. Due to the properties of the \ac{DFT} \cite{Unimod_seq_design}, we know that the spectrum of the upsampled $M$-long signal
\begin{align}\label{eq:upsampling}
    s_{k}[n] = 
    \begin{cases}
        s_{\eta, k}[n/M_{\rm s}]\qquad &{\rm if} \;\;(n\mod M_{\rm s}) = 0\\
        0 & {\rm else}
    \end{cases}
\end{align}
is a replicated and scaled version of $\tilde{s}_{\eta, k}[m]$. Therefore, $\tilde{s}_{k}[m] = {\rm DFT}(s_{k}[n], M)$ has its energy distributed among $M_{s}$ maximally spaced frequencies, but the on-off shape of the time domain sequence $s_{k}[n]$ (see Fig.~\ref{fig:SeqGenDiagram}) would result in undesirable losses at the power amplifiers. Following the same development as \cite{Unimod_seq_design}, we observe that any circular shift of a time domain sequence does not alter the magnitude of its spectrum. Consequently, for any $M_{\rm s}$-long weight vector $\muv$ the sequence
\begin{align}\label{eq:circ_conv}
    s_{k}^{(\muv)}[n] = \sum_{n'=0}^{M_{\rm s}-1} {\rm cshift}_{M}(s_{k}[n], n')[\muv]_{n'}
\end{align}
has constant amplitude and its spectral energy is concentrated among the $M_{\rm s}$ discrete equally spaced frequencies with indices in $\Fc_{k}\coloneqq\{k, k+\eta, \dots, k+(M_{\rm s}-1)\eta\}$. The only remaining part is to design a weight vector $\muv$ that optimizes some criterion. In our case, we intend for the resulting pilot sequence to have its energy distributed as uniformly as possible among the $M_{\rm s}$ active frequencies. 

If we identify \eqref{eq:circ_conv} as a circular convolution, the corresponding operation in the frequency domain becomes the element-wise product between $\tilde{s}_{k}[m]$ and $\tilde{\muv}\coloneqq {\rm DFT}(\muv, M)$, where the \ac{DFT} is performed on the zero-padded version of $\muv$. Since $\tilde{s}_{k}[m]$ has constant magnitude on $\Fc_{k}$, the goal is for $\tilde{\muv}$ to be approximately constant (in magnitude) on $\Fc_{k}$. It can be shown that $[\tilde{\muv}]_{m\in\Fc} = {\rm DFT}(\muv', M_{\rm s})$, where $[\muv']_{n} = [\muv]_{n}e^{-\jim \frac{2\pi}{M}kn}$. The final step is therefore to find an $M_{\rm s}$-long sequence whose $M_{\rm s}$-point \ac{DFT} has approximately constant magnitude. Such a sequence can be found via classical methods like the PeCAN algorithm \cite{PeCAN}. An illustration of the procedure presented here is shown in Fig.~\ref{fig:SeqGenDiagram}. Furthermore, a simplified description of the steps required is given in Algorithm \ref{alg:seq_design}

\begin{algorithm}
\SetKwInOut{Input}{Input}
\SetKwInput{Define}{Define}
\caption{Constant amplitude comb-like spectrum pilot sequence design.}\label{alg:seq_design}
\Input{Sequenc length $M$, number of active frequencies $M_{\rm s}$}
\Define{$\eta \gets M/M_{\rm s}$}
\Input{Pilot index $k\in\{0, 1, \ldots \eta - 1\}$}
\BlankLine
Construct $\tilde{s}_{\eta, k}[m]$ by cyclically shifting $k$ times the $\eta$-sample vector $[1, 0, \dots, 0]$ \;
$s_{\eta, k}[n] \gets {\rm IDFT}(\tilde{s}_{\eta, k}[m], \eta)$\;
Obtain $s_{k}[n]$ from $s_{\eta, k}[n]$ as described in \eqref{eq:upsampling}\;
Obtain an $M_{\rm s}$-sample sequence $\muv'$ satisfying $|{\rm DFT}(\muv', M_{\rm s})|$ approximately constant, using the PeCAN algorithm from \cite{PeCAN}\;
$[\muv]_{n} \gets [\muv']_{n}e^{-\jim\frac{2\pi}{M}kn}$\;
Obtain the desired sequence $s_{k}^{(\muv)}[n]$ by combining cyclic shifts of $s_{k}[n]$ with weights in $\muv$ as described in \eqref{eq:circ_conv}\;
\end{algorithm}

    \section{Network Wide Beam Alignment}\label{sec:NetworkBA}
We now apply the theory derived in the previous sections to efficiently achieve pairwise beam alignment in a \ac{D2D} network. Considering a network as exemplified in Fig.~\ref{fig:D2D_net_and_table}, a classic one-sided beam sweeping approach (as in 802.11ad \cite{80211adAmendment}) would need $K$ rounds of measurements of the type of \eqref{eq:MMV} (i.e. $KQ$ pilots of length $M$, with $Q\geq \Na$) until the table is filled, whereas a two-sided exhaustive search per link would require up to $\Na^{2}K(K-1)/2$ pilots.
On the other hand, by transmitting pilot sequences as described in Section \ref{sssec:Seq_gen}, where each transmitter is assigned a pilot with a different index, receivers can separate pilot signals transmitted simultaneously by multiple transmitters, thereby enabling parallel transmissions. To illustrate this, Fig.~\ref{fig:example_orth_sc} shows an example of the received signal in the frequency domain in a noiseless case where 4 devices are transmitting in orthogonal frequency sets, for $M=1024$, $M_{\rm s}=16$. The receiver can then separate the frequency measurements into the corresponding subsets and estimate its beamspace channel with respect to each of the 4 transmitters independently. It only remains to be studied how to partition the network into transmitters and receivers such that all device pairs are aligned.

\begin{figure}
    \centering
    \includegraphics[width=0.9\linewidth]{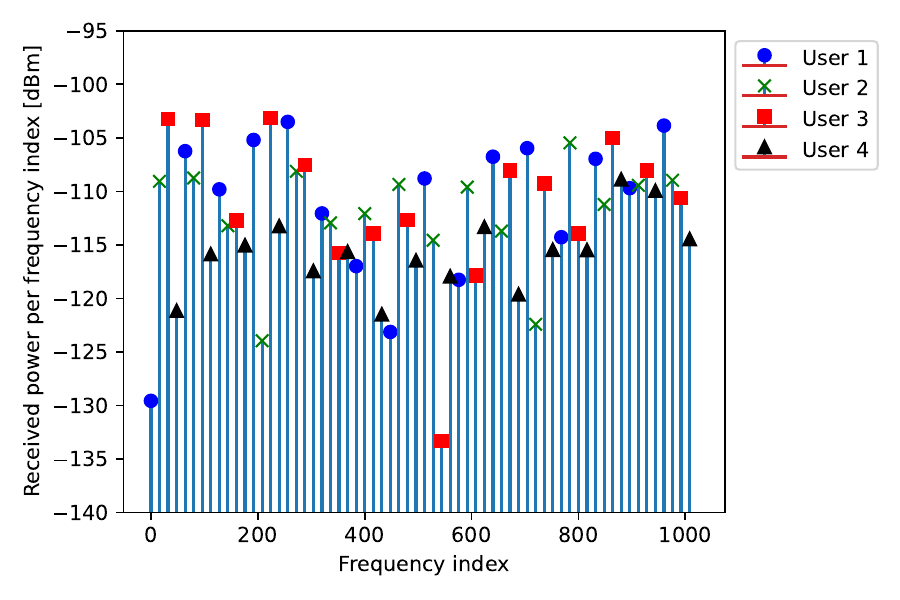}
    \caption{Example of the power distribution per frequency index of the received signal at one receiver in the absence of noise for 4 transmitting devices, each of them with \SI{-15}{\deci\belmilliwatt} transmitted power, when $M=1024$ and $M_{\rm s} = 16$.}
    \label{fig:example_orth_sc}
\end{figure}

For simplicity, we analyze the most unfavourable scenario in which all $K$ devices want to align with each other, resulting in $K(K-1)/2$ total pairs, in order to obtain a pessimistic upper bound on the resources needed for distributed beam alignment. Such a setup can be described by an undirected complete graph $\Gc$ of $K$ nodes. 
We aim to label each of the edges with the beam pair index connecting the corresponding pair of devices. 
Under the mild assumption that $K \leq M/M_{\rm s}$, devices can be partitioned into sets $\Tc_{1}$ and $\Rc_{1}$ of transmitters and receivers respectively, with $\Tc_{1} \cap \Rc_{1} = \emptyset$ and $|\Tc_{1} \cup \Rc_{1}| = K$. Each device $k\in\Tc_{1}$ is assigned a frequency set $\Fc_{k}$ as described in Section \ref{sssec:Seq_gen}. Then, $|\Tc_{1}||\Rc_{1}|$ out of the $K(K-1)/2$ edges can be labeled with a round of measurements of the type \eqref{eq:MMV}. We repeat the procedure with different sets of transmitters and receivers $\Tc_{i}$, $\Rc_{i}$ until all edges are labeled. In the remainder, we refer to every set of transmissions under the same partition of transmitters and receivers as an alignment round.

The problem at hand is thus optimizing the sets of transmitters and receivers at each alignment round as to minimize the total number or rounds needed. While the problem is hard to solve, we can obtain a simple upper bound by considering the greedy approach that tries to label as many edges as possible at each round, which is equivalent to solving a sequence of max-cut problems \cite{CombinatorialOpt}. The max-cut problem is NP-hard in general, but becomes trivial for complete graphs. We start by adding $2^{\lceil\log_{2}K\rceil} - K$ dummy nodes to the graph $\Gc$, such that the number of vertices $K'=2^{\lceil\log_{2}K\rceil}$ of the augmented graph $\Gc'$ is a power of 2. Consider a first partition with transmitters in $\Tc'_{1}$ and receivers in $\Rc'_{1}$, with $|\Tc'_{1}| + |\Rc'_{1}| = K'$. Since $K'$ is even, from the AM-GM inequality \cite{AM_GM},
\begin{align}\label{eq:max_cut_first_round}
    |\Tc'_{1}||\Rc'_{1}| \leq \left(\frac{|\Tc'_{1}| + |\Rc'_{1}|}{2}\right)^{2} = \frac{K'^{2}}{4},
\end{align}
with equality when $|\Tc'_{1}|=|\Rc'_{1}|$. Therefore, the max cut at the first round is trivially achieved by partitioning the $\Gc'$ into two sets of $K'/2$ vertices. 

After removing the labeled edges, the resulting graph is composed of two disconnected subgraphs $\Gc'_{2, 1}$ and $\Gc'_{2, 2}$, each of which is complete and has $K''=K'/2$ vertices. Since $\Gc'_{2, 1} \cap \Gc'_{2, 2} = \emptyset$ (i.e. the set of vertices of $\Gc'_{2, 1}$ and the set of vertices of $\Gc'_{2, 2}$ are non overlapping, and identically for the sets of edges), the max-cut of $\Gc'$ is given by the combination of the max-cut of $\Gc'_{2, 1}$ and the max cut of $\Gc'_{2, 2}$. Since $K'$ is a power of 2, $K''$ is even and we can apply \eqref{eq:max_cut_first_round} to prove that we can label at most $K''^2/4$ edges of each of the subgraphs. Combining the cuts in the two subgraphs, we obtain
\begin{align}
    \frac{(K'')^{2}}{4} + \frac{(K'')^{2}}{4} = \frac{(K'/2)^{2}}{4} + \frac{(K'/2)^{2}}{4} = \frac{K'^{2}}{8}
\end{align}
of the remaining links. Iterating on this idea, at the $i$-th alignment round we will have labeled
\begin{align}
    \sum_{\rho=0}^{i-1}2^{\rho}\frac{(K'/2^{\rho})^{2}}{4} = \sum_{\rho=0}^{i-1}\frac{K'^{2}}{2^{\rho + 2}} = \left(1 - \frac{1}{2^{i}}\right)\frac{K'^{2}}{2}.
\end{align}
The minimum number of alignment rounds to label all the links can be found as
\begin{align}
    i_{\rm opt} &= \min\left\{i\in\ZZ \;\bigg| \left(1 - \frac{1}{2^{i}}\right)\frac{K'^{2}}{2} \geq \frac{K'(K'-1)}{2}\right\}\nonumber\\
    &=\min\left\{i\in\ZZ \;\bigg| 2^i \geq K'\right\}\nonumber\\
    &=\log_{2}K'
\end{align}
since $K'$ is a power of two. Therefore, for the network of $K$ nodes, the number of alignment rounds is upper bounded by $\log_{2}K'=\log_{2}2^{\lceil\log_{2}K\rceil}=\lceil\log_{2}K\rceil$. We can thus reduce the total overhead of network-wide \ac{BA} from the conventional linear scaling with the number of devices to a logarithmic regime.

    \section{Communication phase}\label{sec:CommPhase}
We now describe how to apply the results of the previous sections to enable robust information sharing between devices in the network. The \ac{DFT} codebook we considered for beam alignment was suitable for an estimator based on compressed sensing as it provides a basis on which the channel is sparse. However, its sensitivity to small pointing errors can result in detrimental performance when used over a longer period of time in which devices might undergo macroscopic motion. Therefore, we propose to trade off peak performance with robustness by considering the use of wider beams for communication. In particular, we design a \textit{template} beam pattern with a desired beamwidth and generate a new codebook by steering it towards a finite number of directions. While many designs are readily available from the digital filter theory literature, we considered the Remez exchange algorithm \cite{Remez} for its flexibility and ease of implementation. An example of the resulting design, which we refer to as flat-top beam, for different beamwidths is shown in Fig.~\ref{fig:flattop_beams}. 

\begin{figure}
    \centering
    \includegraphics[width=0.9\linewidth]{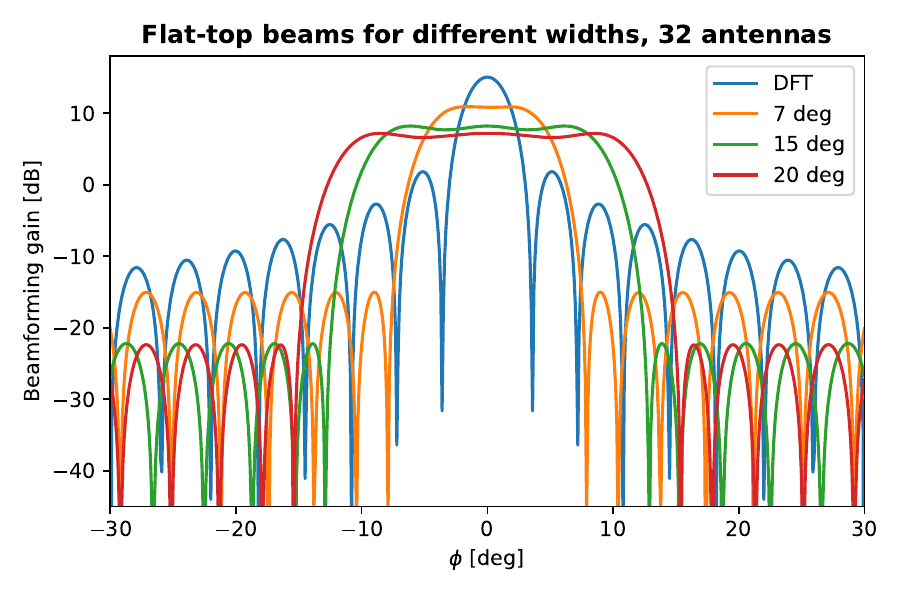}
    \caption{Example of the beams considered for communication as described in Section \ref{sec:CommPhase} when $N_{\rm a} = 32$. In all cases, the \ac{DFT} codebook has the narrower beamwidth on the main lobe and therefore yields the largest beamforming gain. However, it is the most sensitive to misalignment errors and the high sidelobes result in insufficient rejection of signals impinging from undesired directions.}
    \label{fig:flattop_beams}
\end{figure}

Let the coefficient vector generating the desired taper be given by $\omegav$, where $\|\omegav\|^{2} = 1$, and the estimated index in the \ac{DFT} basis of the best beam for a given transmitter to communicate with a receiver be $i_{\rm opt}$. Then, the actual beamforming vector used for communication is given by $\vv_{\rm C} = \diag(\omegav)\cv_{i_{\rm opt}}$. The same methodology is followed at the receiver.

In order to assess the feasibility of single carrier communication without equalization after beam alignment, we investigate now the effect of multipath before and after beamforming in our scenario of interest by means of ray tracing simulations produced by the state of the art software Wireless InSite \cite{InSite}. 
First, we present in Fig.~\ref{fig:freq_response_RT} an example of the frequency response of the channel before and after beamforming for a given placement of two devices, when the beams chosen are the flat-top beams from the codebook maximizing beamforming gain. Clearly, the frequency response becomes approximately flat after beamforming, illustrating that most of the multipath is effectively supressed This result is confirmed in Fig.~\ref{fig:SE_vs_MFB}, which compares the performance of a suboptimal receiver that just treats the first multipath component as useful signal while the rest are treated as interference, and the \ac{MFB}, which is an upper bound attained by an optimal receiver that perfectly equalizes the channel and captures the energy of all multipath components. In this case, a transmitter location was fixed, while a receiver was placed at different locations in a room to evaluate performance against device distance and proximity to walls. The figure shows that single carrier transmission without equalization attains the performance upper bound at most locations. While before beamforming, the gap between both curves is also small, the very low \ac{SNR} makes reliable communication at positive rates infeasible at most locations.

\begin{figure}
    \centering
    \includegraphics[width=\linewidth]{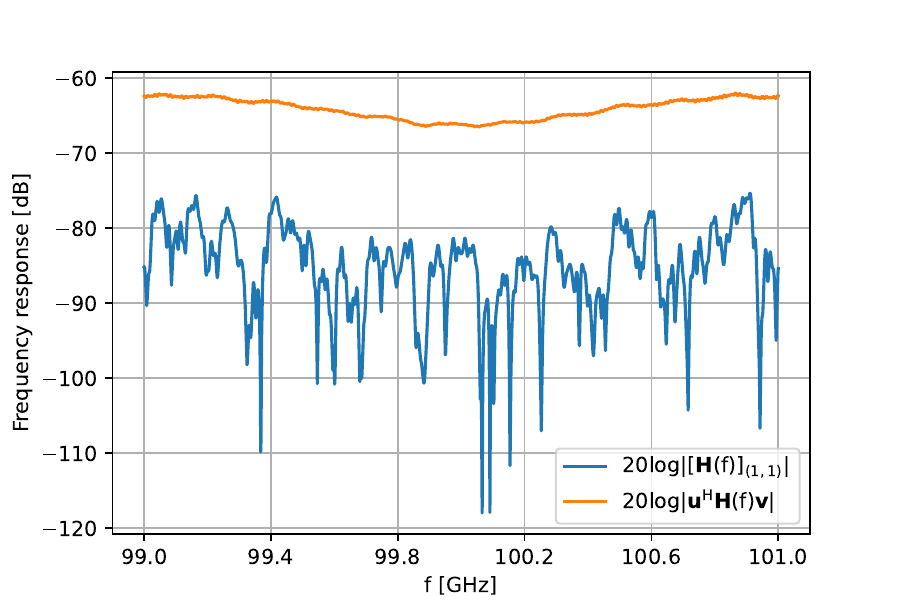}
    \caption{Frequency response of the channel before and after beamforming, when $N_{\rm a} = 32$, the central frequency is \SI{100}{\giga\hertz}, and the beamforming is done with the optimal beam pair of $\ang{7}$ beams as in Fig.~\ref{fig:flattop_beams}.}
    \label{fig:freq_response_RT}
\end{figure}

\begin{figure}
    \centering
    \includegraphics[width=0.8\linewidth]{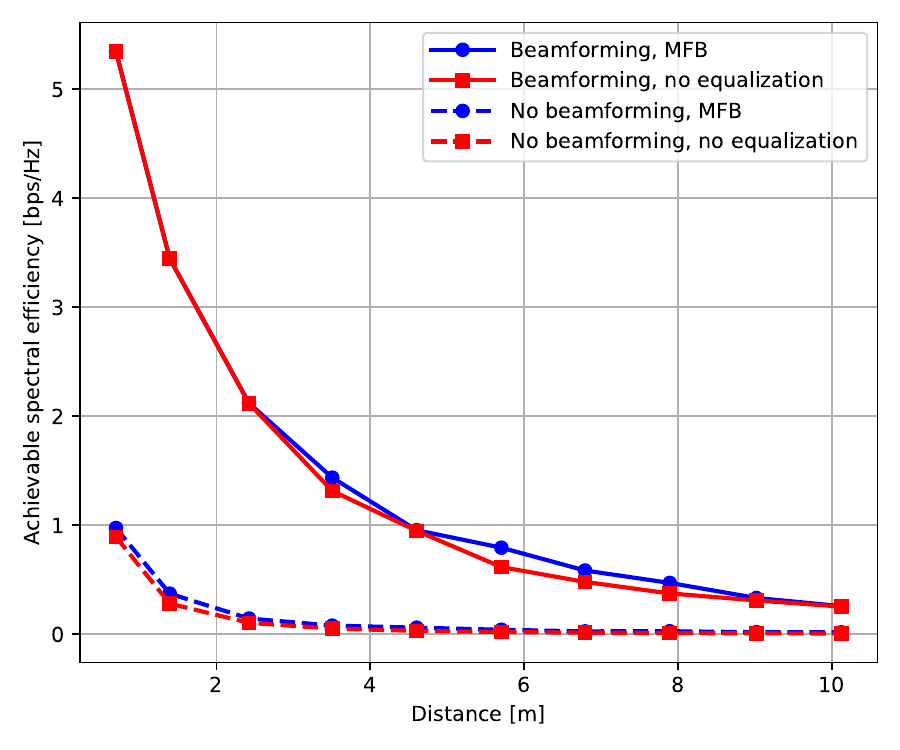}
    \caption{Achievable spectral efficiency before and after beamforming compared with the \ac{MFB} for \SI{-15}{\deci\belmilliwatt} transmit power at \SI{100}{\giga\hertz}. Beamforming is done with 32 antennas and flat-top beams of $\ang{7}$ beamwidth in all cases.}
    \label{fig:SE_vs_MFB}
\end{figure}

    \section{Numerical evaluation}\label{sec:Results}

\subsection{Baseline Scheme}

To the best of our knowledge, the problem of achieving pairwise beam alignment in a \ac{D2D} network has not been studied in the literature, therefore we cannot compare the performance of our method against an existing baseline. Instead, we propose a "802.11ad inspired" baseline where transmitters probe different directions while receivers listen omnidirectionally, thereby estimating a beam index per alignment round instead of a beam index pair. Since we are interested in the case of low $Q$, beam sweeping might not be an option, since there might not be enough time to sweep accross all beams in an alignment round. Therefore, we propose a baseline consisting of transmitter side random beam probing, where each transmitter samples several directions simultaneouly for several rounds. 

For this, we introduce the one-sided beamspace vector
\begin{align}
    \check{\hh}_{\rm os}[m] \coloneqq \frac{1}{\sqrt{\Na}}\onev_{\Na}^\T\Cm^\H\tilde{\Hm}[m]\Cm,
\end{align}
which is the columnwise average of the beamspace matrix $\Cm^\H\tilde{\Hm}[m]\Cm$ at frequency index $m$, and $\check{\HH}_{\rm os} \coloneqq [\check{\hh}^{\T}_{\rm os}[0], \dots, \check{\hh}^{\T}_{\rm os}[M-1]] \in \CC^{\Na\times M}$. A measurement model similar to \eqref{eq:MMV} can be derived, resulting in 
\begin{align}\label{eq:meas_bl}
    \Ym_{\rm os} = \frac{1}{\sqrt{M}}[\check{\vv}_{1},\dots,\check{\vv}_{Q}]^\T\check{\HH}_{\rm os}\diag(\tilde{\sv}) + \Zm_{\rm os},
\end{align}
where $\Zm_{\rm os} \in \CC^{Q \times M}$. In this case, we combine the measured samples at different frequencies noncoherently for each symbol, to resemble conventional energy type measurements. Finally, the channel gain per transmitter beam is estimated at the receiver solving a similar problem to \eqref{eq:cs_mmv}, where the $\ell_{2, 1}$ norm is replaced by the $\ell_{1}$ norm.

Notice that this method can also be put in the framework of Section \ref{sec:NetworkBA} such that multiple devices can simultaneously transmit on orthogonal channels, resulting on a logarithmic scaling of the number of alignment rounds with the number of devices. However, due to the \textit{one-sided} nature of this method, the number of alignment rounds doubles with respect to our proposed approach that jointly estimates the beams for each pair of devices, i.e. the total number of required alignment rounds for the baseline is $2\lceil\log_{2}K\rceil$\footnote{More rigorously, for a sequence of partitions $\{\Tc_{i}, \Rc_{i}\}$, with $i=1,\dots, \lceil\log_{2}K\rceil$ as described in Section~\ref{sec:NetworkBA} for our method, the baseline would adopt a sequence $\{\Tc^{({\rm b})}_{i'}, \Rc^{({\rm b})}_{i'}\}$ with $i' = 1,\dots,2\lceil\log_{2}K\rceil$ where $(\Tc^{({\rm b})}_{2i-1}, \Rc^{({\rm b})}_{2i-1}) = (\Tc_{i}, \Rc_{i})$ and $(\Tc^{({\rm b})}_{2i}, \Rc^{({\rm b})}_{2i}) = (\Rc_{i}, \Tc_{i})$.}.
Furthermore, while our method has the potential of reconstructing general sparse beamspace matrices, the baseline approach is only able to produce rank-1 approximations obtained as the outer product of the one dimensional estimates obtained at each end of a link. An illustrative example is shown in Fig.~\ref{fig:beamspace_ex}. The figure clearly shows that, while the baseline successfully recovers the position of the strongest beam pair, it fails to estimate secondary paths, due to the independent estimation of the transmit and receive beamspace vectors. This may complicate subsequent operations such as scheduling or fast reconfiguration due to blockage of the main path. On the other hand, our method is able to recover some of the strong \ac{NLOS} components.

\begin{figure*}
    \centering
    \subfloat[Ground truth beamspace matrix]{\includegraphics[width=0.3\linewidth]{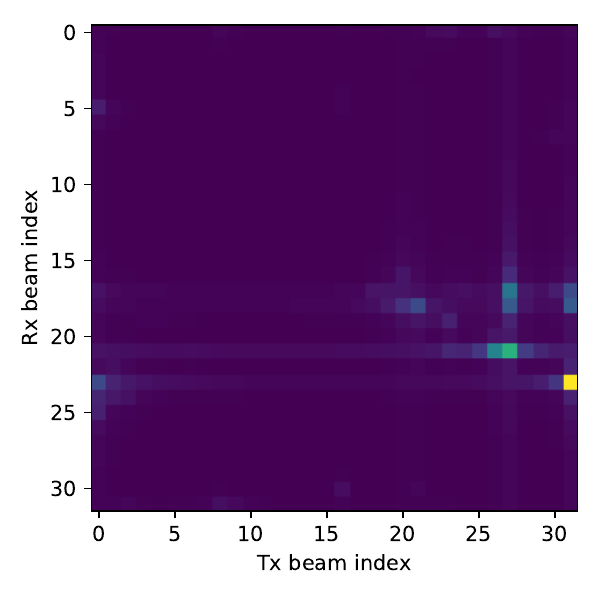}}
    \qquad
    \subfloat[MMV estimate]{\includegraphics[width=0.3\linewidth]{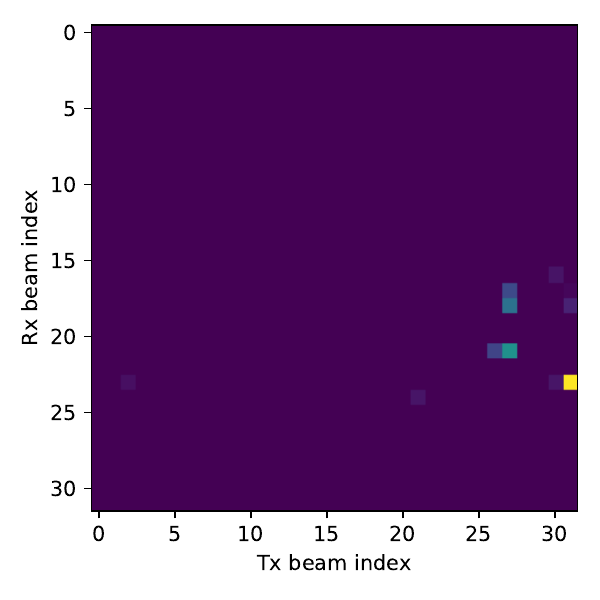}}
    \qquad
    \subfloat[Baseline estimate]{\includegraphics[width=0.3\linewidth]{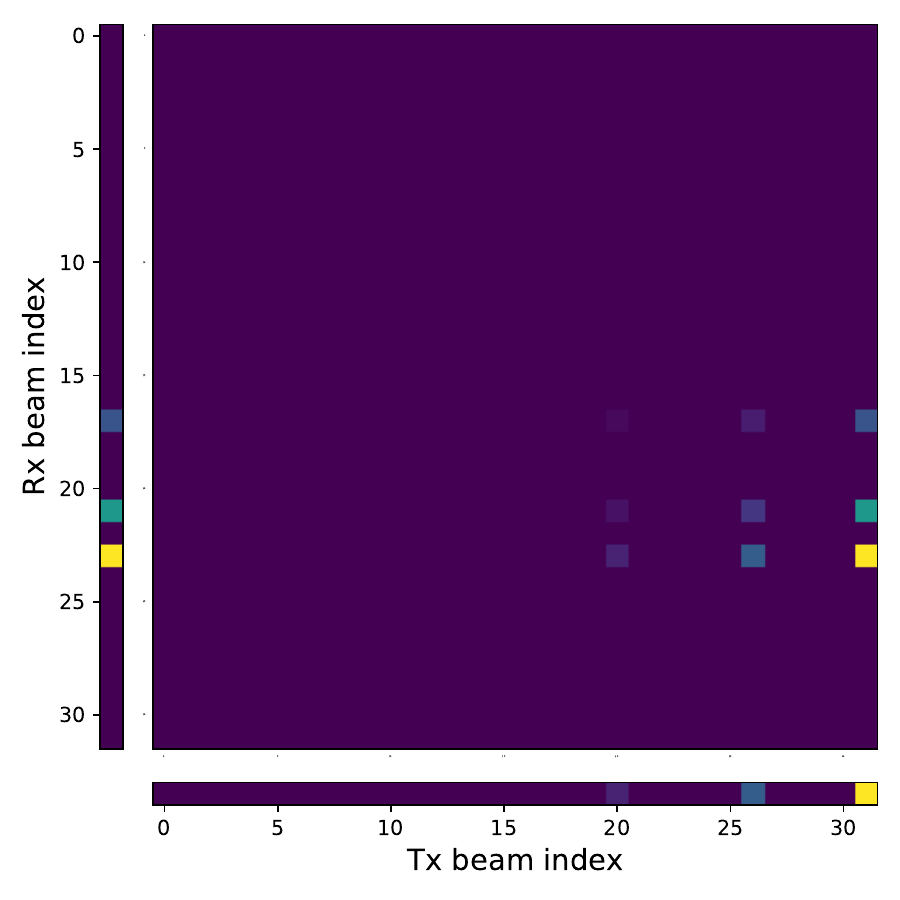}}
    \caption{Illustration of the different results provided by our method and the considered baseline. (a) shows the ground truth beamspace matrix to be estimated. (b) illustrates the result of our method, providing a sparse reconstruction of the beamspace matrix. (c) illustrates the baseline method. First, the device labeled as Tx performs a beam sweep while Rx listens omnidirectionally, such that Rx obtains the represented row vector. Subsequently, the roles are reversed, and Tx now obtains the column vector based on a beam sweep at Rx. A rank-1 approximation to the beamspace matrix can be obtained as an outer product of both measurements. }
    \label{fig:beamspace_ex}
\end{figure*}

\subsection{Results}

\begin{table}
    \centering
    \begin{tabular}{|c|c|}
        \hline
        $f_{0} = $ \SI{100}{\giga\hertz} & $B = $ \SI{2}{\giga\hertz}\\
        \hline
        $M = $ 1024 & $N_{a} = $ 32\\
        \hline
        $K  = $ 8 & $N_{\rm 0}$ = \SI{-174}{\deci\belmilliwatt/\hertz}\\
        \hline
        \multicolumn{2}{|c|}{Communication beamwidth $\approx \ang{7}$}\\
        \hline
    \end{tabular}
    \caption{System parameters.}
    \label{tab:sys_params}
\end{table}

\begin{figure}
    \centering
    \includegraphics[width=0.75\linewidth]{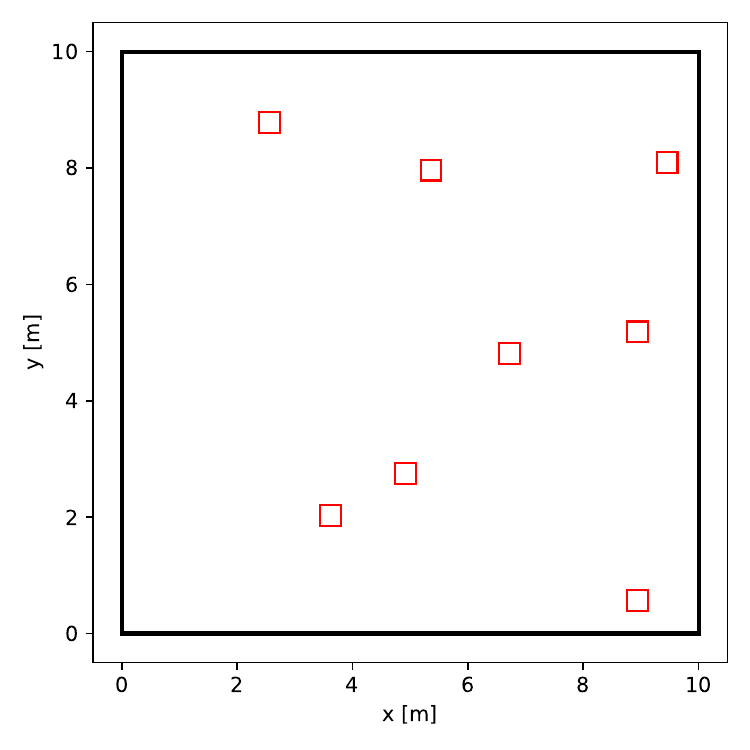}
    \caption{Position of the devices for the performance evaluation of network-wide beam alignment.}
    \label{fig:DevPosAll}
\end{figure}

\begin{figure}
    \centering
    \subfloat[\SI{-20}{\deci\belmilliwatt}]{\includegraphics[width=0.9\linewidth]{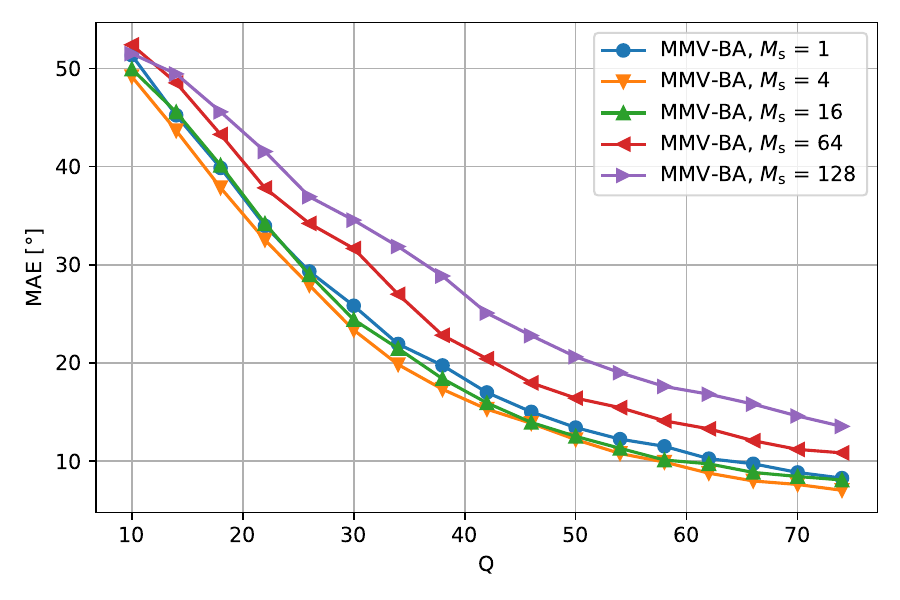}}
    \\
    \subfloat[\SI{-15}{\deci\belmilliwatt}]{\includegraphics[width=0.9\linewidth]{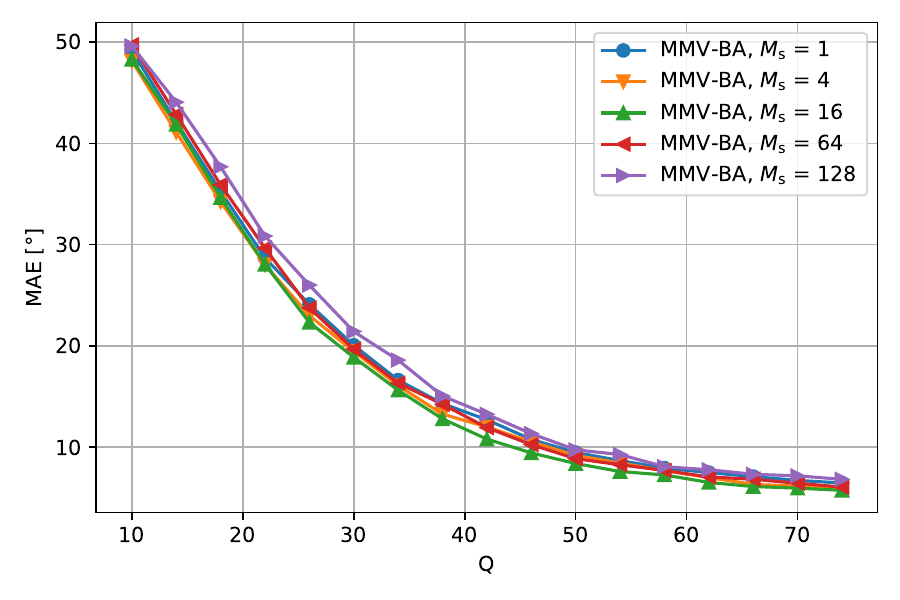}}
    \caption{Mean absolute error of angle estimation as a function of the number of pilots per alignment round considering all pairs of devices in the network considered.}
    \label{fig:AngleMAE}
\end{figure}

\begin{figure*}
    \centering
    \subfloat[\SI{-20}{\deci\belmilliwatt}, $Q=30$.]{\includegraphics[width=0.3\linewidth]{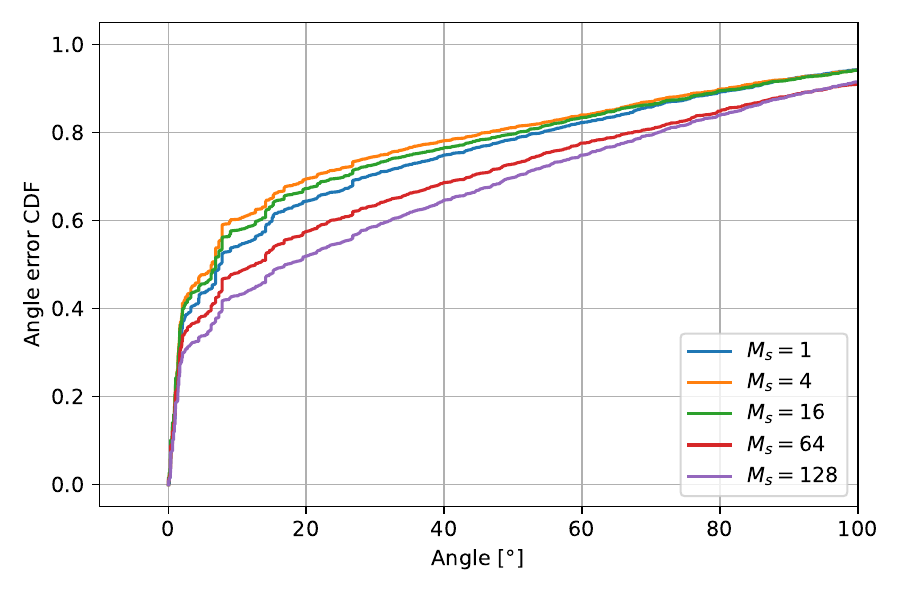}}
    \qquad
    \subfloat[\SI{-20}{\deci\belmilliwatt}, $Q=50$.]{\includegraphics[width=0.3\linewidth]{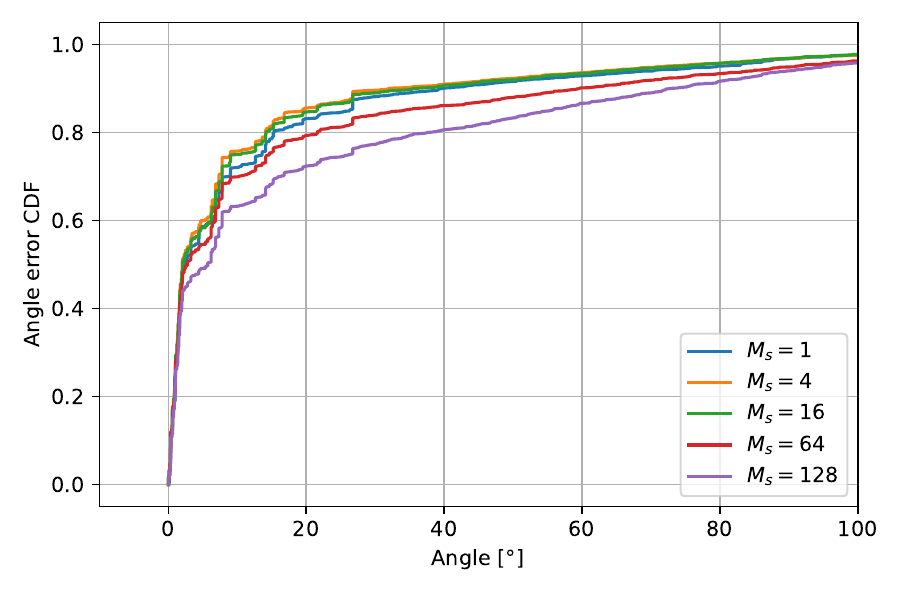}}
    \qquad
    \subfloat[\SI{-20}{\deci\belmilliwatt}, $Q=70$.]{\includegraphics[width=0.3\linewidth]{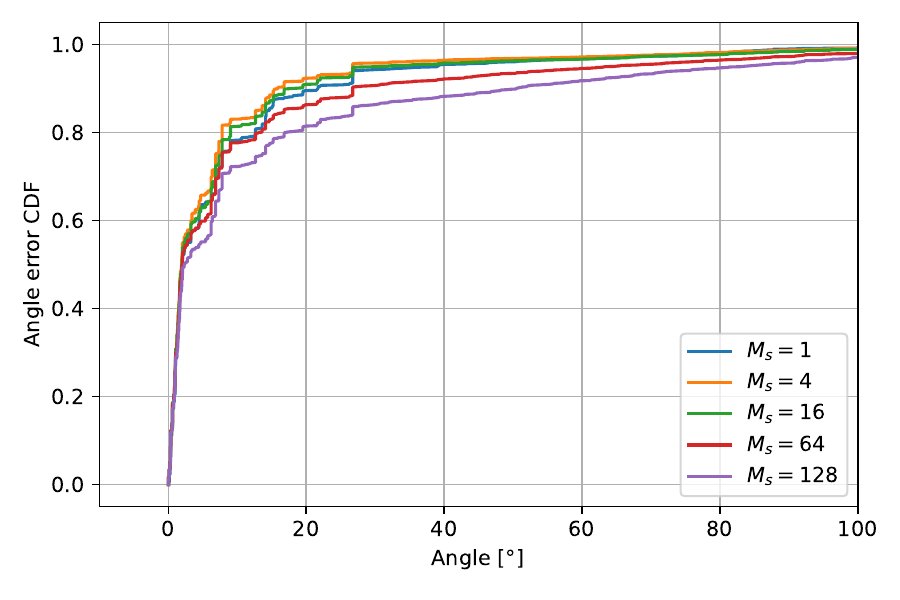}}
    \\
    \subfloat[\SI{-15}{\deci\belmilliwatt}, $Q=30$.]{\includegraphics[width=0.3\linewidth]{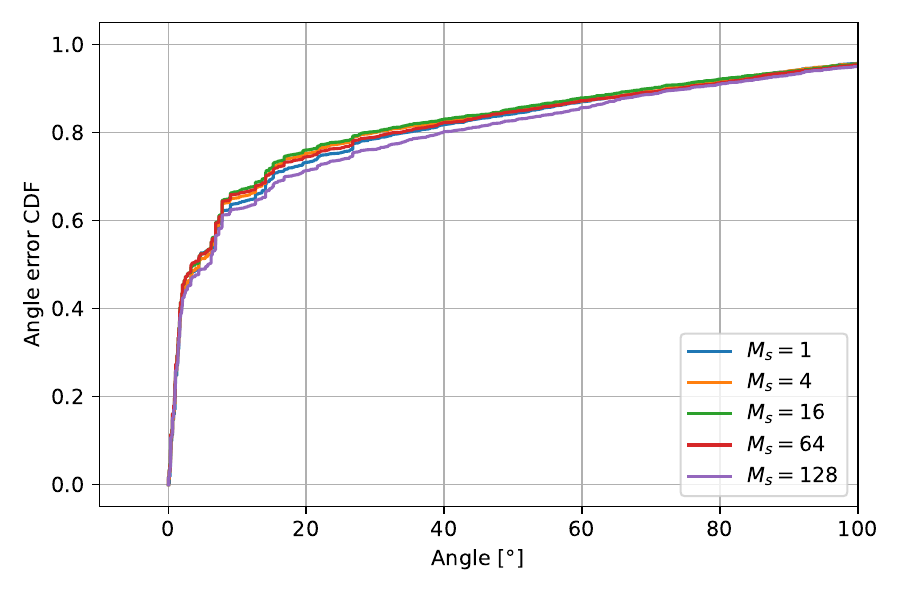}}
    \qquad
    \subfloat[\SI{-15}{\deci\belmilliwatt}, $Q=50$.]{\includegraphics[width=0.3\linewidth]{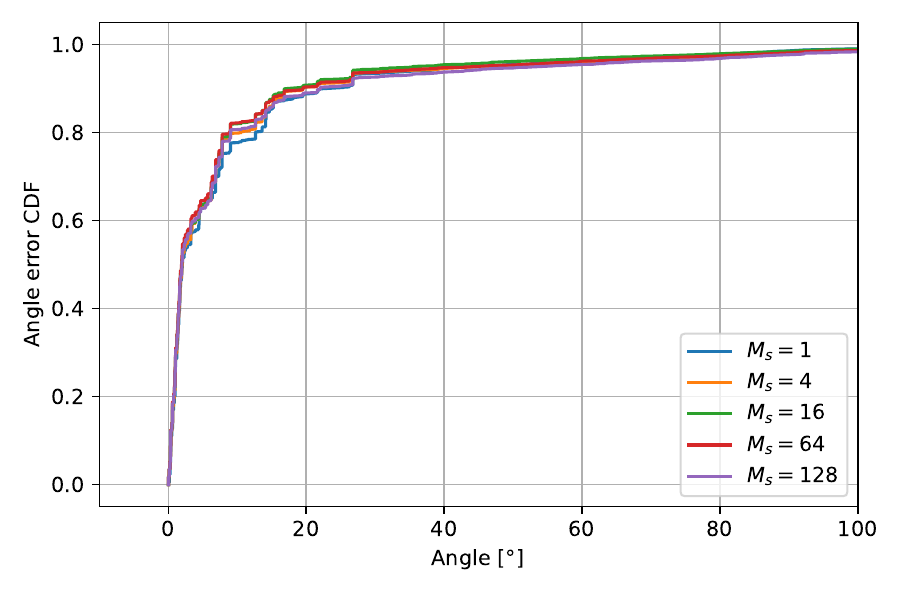}}
    \qquad
    \subfloat[\SI{-15}{\deci\belmilliwatt}, $Q=70$.]{\includegraphics[width=0.3\linewidth]{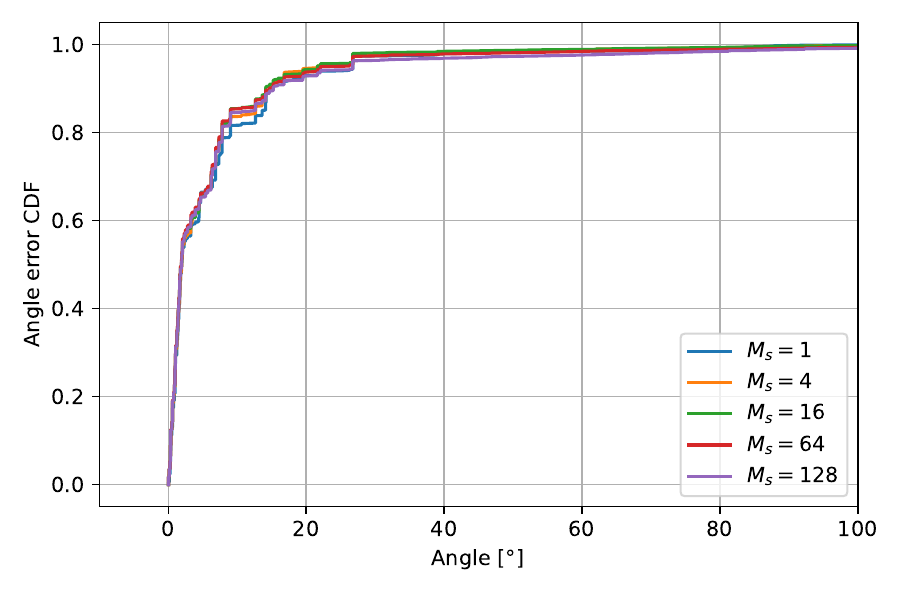}}
    \caption{CDF of angle error for different transmit power and pilot symbols per alignment round.}
    \label{fig:AngleCDF}
\end{figure*}

We evalute the methods described here in a \SI{10}{\meter} $\times$ \SI{10}{\meter} $\times$ \SI{3}{\meter} room with $K=8$ devices as illustrated in Fig.~\ref{fig:DevPosAll}. All relevant system parameters considered are specified in Table~\ref{tab:sys_params}.
In particular, we run full rounds of network-wide alignment such that at the end of each run we obtain an estimate of the table illustrated in Fig.~\ref{fig:D2D_net_and_table}. For each simulation and each alignment round, we sample a new sampling matrix in order to average the effect of each specific realization over the ensemble of feasible matrices we presented. 

We start by examining the performance of beam alignment in terms of the angular error between the ground truth angle of arrival of the \ac{LOS} path between each pair of devices and the central angle of the beam estimated for that pair. Fig.~\ref{fig:AngleMAE} shows the \ac{MAE} as a function of the total number of pilots per alignment round for two different values of the transmit power. It can be clearly observed that when the power is more limited, spreading energy across a large number of frequencies can have a very detrimental effect on performance. Furthermore, it can be observed that the error does not evolve monotonically as a function of the number of frequencies. Instead, we see better performance at intermediate spreading levels. This result confirms the value of our pilot design, which allows to flexibly decide on the amount of spreading in frequency domain. On the other hand, for higher values of transmit power, spreading does not seem to have a substantial effect. A more detailed statistical view on the error performance is shown in Fig.~\ref{fig:AngleCDF}, which shows the \ac{CDF} of the angle error for certain fixed values of the number of pilots per alignment round $Q$. Similar to the \ac{MAE} results, we observe a higher dependency on the frequency spreading for the lower power configuration. It can be observed that the slope of the \ac{CDF} curves is high until around $\ang{7}$. Therefore, for the following results, we design a flat-top beam of $\ang{7}$ beamwidth.

To present results in terms of achievable sum rates, we group devices into pairs as illustrated in Fig.~\ref{fig:DevicePairs}, where devices transmit and receive applying the flat-top beams indicated by the estimated alignment table. Let us now refer with $\vv^{(\psf)}$ (respectively $\uv^{(\psf)}$) to the beamforming vector chosen by the transmiter (respectively receiver) of the $\psf$-th beam pairs indicated in Fig.~\ref{fig:DevicePairs}, with $\psf\in\{1, 2, 3, 4\}$. 
Referring now to the discrete-time \ac{MIMO} channel between the transmitter of the $\psf_{1}$-th pair and the receiver of the $\psf_{2}$-th pair as $\overline{\Hm}_{\psf_{1}, \psf_{2}}[n]$ (similar to \eqref{eq:DT_IO}), we compute the achievable sum spectral efficiency as 
\begin{align}\label{eq:se}
    {\rm SE} = \sum_{\psf=1}^{4}\log_{2}\left(1 + {\rm SINR}_{\psf}\right),
\end{align}
where
\begin{align}\label{eq:SINR}
    {\rm SINR}_{\psf} = \frac{P_{\rm tx}|(\uv^{(\psf)})^\H\overline{\Hm}_{\psf, \psf}[0]\vv^{(\psf)}|^{2}}{N_{\rm 0}B + {\rm ISI}(\psf) + \sum_{\psf'\neq \psf}{\rm MUI}(\psf, \psf')},
\end{align}
where ${\rm ISI}(\psf)$ is the intersymbol interference assuming no equalization, which is given by
\begin{align}
    {\rm ISI}(\psf) \coloneqq P_{\rm tx}\sum_{n=1}^{L}|(\uv^{(\psf)})^\H\overline{\Hm}_{\psf, \psf}[n]\vv^{(\psf)}|^{2}
\end{align}
and ${\rm MUI}(\psf, \psf')$ is the multi-user interference at the receiver of the $\psf$-th pair caused by the transmitter of the $\psf'$-th pair, which is given by
\begin{align}
    {\rm MUI}(\psf, \psf') \coloneqq P_{\rm tx}\sum_{n=0}^{L}|(\uv^{(\psf)})^\H\overline{\Hm}_{\psf', \psf}[n]\vv^{(\psf')}|^{2}.
\end{align}
In \eqref{eq:SINR}, $\overline{\Hm}_{\psf, \psf}[0]$ is the channel matrix associated with the \ac{LOS} path of the $\psf$-th pair, while $\overline{\Hm}_{\psf, \psf}[n]$, for $n>0$ is treated as \ac{ISI}.

\begin{figure}
    \centering
    \includegraphics[width=0.8\linewidth]{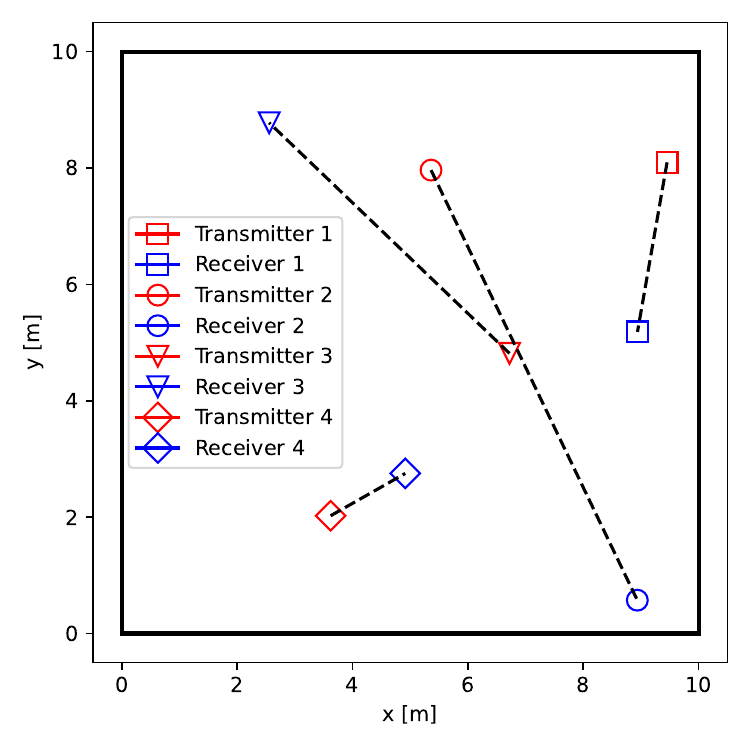}
    \caption{Device pairing considered to evaluate the achievable sum spectral efficiency.}
    \label{fig:DevicePairs}
\end{figure}

We present the performance improvement of our method with respect to the baseline by fixing the transmit power to $\Ptx=$\SI{-20}{\deci\belmilliwatt} and \SI{-15}{\deci\belmilliwatt}, and evaluating the achievable results as a function of the total budget of pilot symbols\footnote{Notice that after our network-wide \ac{BA} procedure, all devices know the beam configuration to communicate with any other node in the network. We only group them in pairs for the sake of presenting results in terms of achievable sum rates.}. Fig.~\ref{fig:sr_vs_Q_m20dBm} and Fig.~\ref{fig:sr_vs_Q_m15dBm} illustrate the achievable sum spectral efficiency after network-wide beam alignment, which clearly show much faster convergence of our method to a good alignment condition than the baseline. It is observed that in the lower \ac{SNR} case, our method achieves best performance when the energy is distributed among 4 frequencies, whereas the worst results are obtained for $M_{\rm s}=1$ and $M_{\rm s}=64$. This clearly illustrates the trade-off we described in the text and in the angle estimation results, also observed in other works such as \cite{Xiaoshen2018scalable}. When the pilot is spread across many indices, each mesurement is too noisy to produce accurate estimates, whereas if the pilot is too concentrated in frequency, the number of signal dimensions for estimation becomes too low. Converesly, in the \SI{-15}{\deci\belmilliwatt} case, all levels of spreading achieve very similar performance. This points to the fact that the \ac{SNR} is high enough to allow for frequency spreading, but not much information is gained from every frequency measurement due to the high correlation of the channel response in frequency domain (see Fig.~\ref{fig:freq_response_RT}). For the baseline, we observe that better performance is generally obtained with lower spreading, which can be explained by the noncoherent combining of the samples accross frequencies we considered.

\begin{figure}
    \centering
    \includegraphics[width=0.9\linewidth]{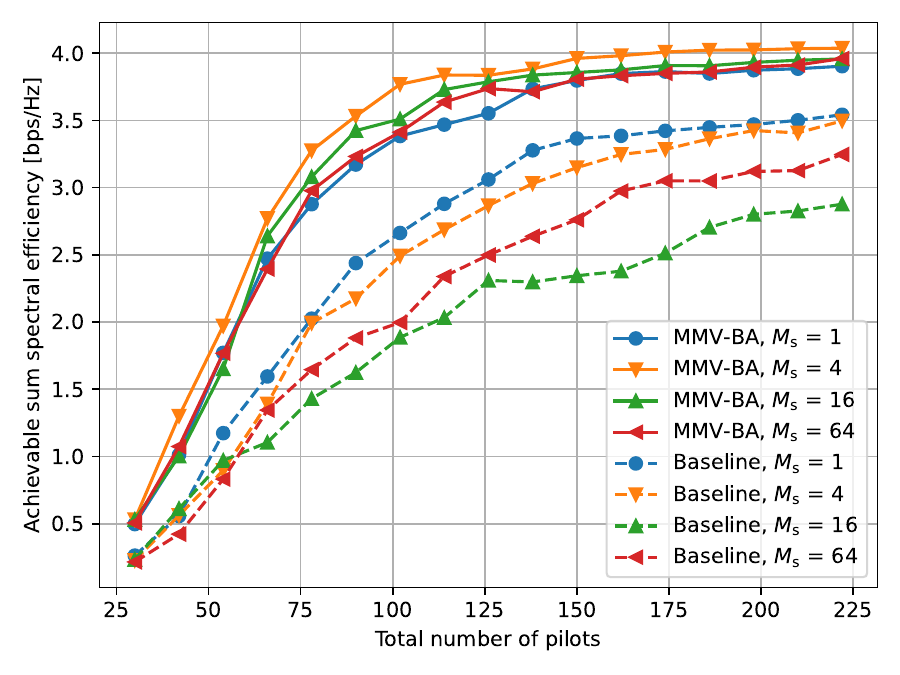}
    \caption{Achievable sum spectral efficiency after network-wide beam alignment as a function of the total number of pilot sequences when the transmit power is fixed to \SI{-20}{\deci\belmilliwatt}.}
    \label{fig:sr_vs_Q_m20dBm}
\end{figure}

\begin{figure}
    \centering
    \includegraphics[width=0.9\linewidth]{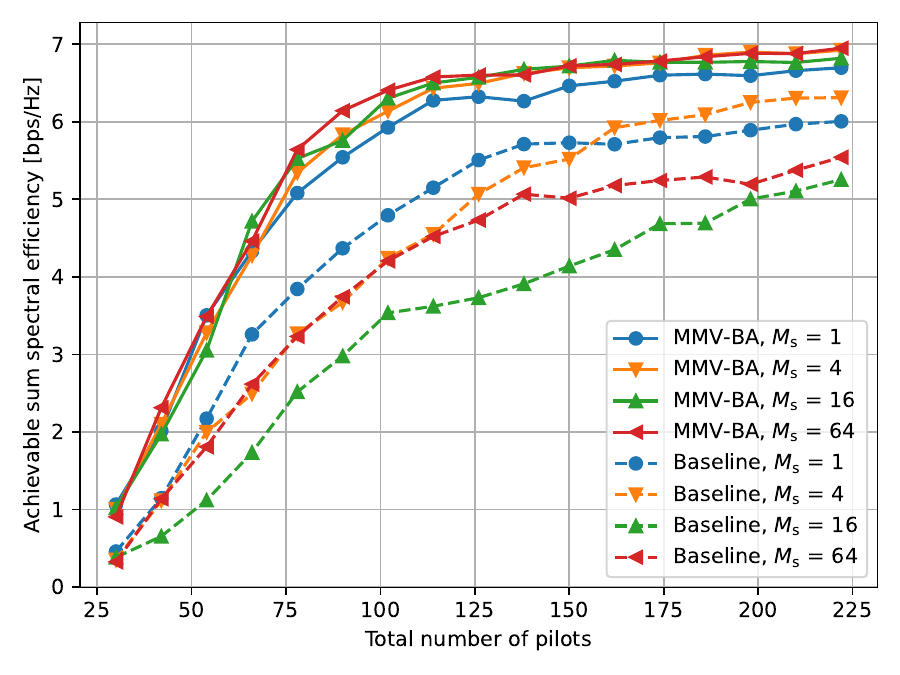}
    \caption{Achievable sum spectral efficiency after network-wide beam alignment as a function of the total number of pilot sequences when the transmit power is fixed to \SI{-15}{\deci\belmilliwatt}.}
    \label{fig:sr_vs_Q_m15dBm}
\end{figure}

    \section{Conclusions and Future Work}\label{sec:Conclusions}
In this paper, we described a method for fast network-wide \ac{BA} in \ac{D2D} sub-THz networks. By considering synchronous devices, our method processes signals coherently resulting in a large reduction in the time required for all devices to be aligned with each other. Furthermore, we proposed a simple procedure to design pilot sequences that concentrates energy in only a subset of frequencies while satisfying a constant envelope constraint in the time domain. Finally, we propose a strategy to apply the described methodology to network scenarios and quickly achieve network-wide alignment. Our methods are verified for different power conditions in a room without obstacles by means of ray tracing simulations. Besides, we show experimentally that at sub-THz frequencies, single carrier communication without equalization is almost optimal after beamforming.

    \appendix
    \subsection{Design of i.i.d. sampling matrices}\label{ap:SamplingMatrix}
We describe in this appendix how a specific design of the beam weighting vectors introduced in Section \ref{sec:SingleBA} results in an i.i.d. sampling matrix, such that the theory and algorithms of \ac{CS} can be directly applied. The design is motivated by the following Lemma.
\begin{lemma}\label{lem:kron_iid_uc}
    Denote as $(\uc, \boreluc)$ the complex unit circle with the Borel $\sigma$-algebra restricted to the unit circle. As a natural measure of lengths on $\uc$, consider the push-forward $\mu=\lambda\circ f^{-1}$ of the Lebesgue measure $\lambda$ on $[0,2\pi)$ under $f:[0,2\pi)\rightarrow\uc, t\mapsto e^{\jim t}$.
    Let $X_i$ be i.i.d. random variables uniformly distributed on $(\uc, \boreluc, \mu)$, i.e. $\PP(X_i\in A)=\frac{\mu(A)}{2\pi}$ for all $A\in\boreluc$, for $i=1,2,3$. The random variables $Y=X_{1}\cdot X_{2}$ and $Z=X_{1}\cdot X_{3}$ are as well independent and uniformly distributed on $(\uc, \boreluc, \mu)$. 
\end{lemma}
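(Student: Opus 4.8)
The plan is to exploit the group structure of the unit circle $\uc$ under complex multiplication, for which the uniform law (the normalized measure $\mu/2\pi$) is precisely the translation-invariant Haar probability measure. The single structural fact I will lean on is the \emph{absorbing property} of Haar measure: if $U$ is uniform on $\uc$ and $w\in\uc$ is any fixed point, then $wU$ is again uniform, since left multiplication by $w$ is a measure-preserving bijection of $\uc$. I would combine this with a conditioning argument on the shared factor $X_1$, which is the only source of potential dependence between $Y$ and $Z$.

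First I would condition on $X_1=x$ for a fixed $x\in\uc$. Because $X_2$ and $X_3$ are independent of $X_1$ and of each other, the conditional law of the pair $(Y,Z)=(xX_2,\,xX_3)$ given $X_1=x$ is the product of the conditional law of $xX_2$ and that of $xX_3$. By the absorbing property each of $xX_2$ and $xX_3$ is uniform on $\uc$, so the conditional law of $(Y,Z)$ given $X_1=x$ is the product measure $\mathrm{Unif}\times\mathrm{Unif}$ on $\uc\times\uc$. The key observation is that this conditional law does \emph{not} depend on the value $x$.

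The final step is to pass from the conditional to the unconditional statement: since the regular conditional distribution of $(Y,Z)$ given $X_1$ is $\PP$-almost surely equal to the fixed product measure $\mathrm{Unif}\times\mathrm{Unif}$, integrating against the law of $X_1$ shows that $(Y,Z)$ itself has law $\mathrm{Unif}\times\mathrm{Unif}$. In particular both marginals are uniform on $\uc$ and the joint law factors, so $Y$ and $Z$ are independent. I expect the main obstacle to be purely measure-theoretic bookkeeping rather than conceptual: invoking the existence of a regular conditional distribution (automatic here, as $\uc$ is a compact metric space) and justifying that a conditional law constant in the conditioning variable coincides with the marginal.

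A clean alternative that sidesteps conditioning is the Fourier route, which I would include as a cross-check. The distribution of a $\uc$-valued variable is determined by its moments $\EE[\,X^{n}\,]$, $n\in\ZZ$, with uniformity equivalent to $\EE[X^{n}]=\mathbbm{1}_{\{n=0\}}$. Writing $X_i=e^{\jim\Theta_i}$ and using independence of the $\Theta_i$ gives
\begin{align}
    \EE\!\left[Y^{m}Z^{n}\right]
    =\EE\!\left[X_1^{m+n}\right]\EE\!\left[X_2^{m}\right]\EE\!\left[X_3^{n}\right]
    =\mathbbm{1}_{\{m+n=0\}}\,\mathbbm{1}_{\{m=0\}}\,\mathbbm{1}_{\{n=0\}}
    =\mathbbm{1}_{\{m=0\}}\,\mathbbm{1}_{\{n=0\}},
\end{align}
which simultaneously exhibits the uniform marginals (setting $n=0$, resp. $m=0$) and factors as $\EE[Y^{m}]\EE[Z^{n}]$. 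Density of the trigonometric monomials on the torus $\uc\times\uc$ (Stone--Weierstrass) then upgrades this factorization of all mixed moments to equality of the joint law with the product of the marginals, i.e. to independence.
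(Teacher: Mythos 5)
Your primary argument is, in substance, the paper's own proof: the paper also conditions on the shared factor $X_1=e^{\jim t}$, writes $\PP(Y\in A,Z\in B)$ as $\frac{1}{2\pi}\int_{[0,2\pi)}\PP(X_2\in e^{-\jim t}A)\,\PP(X_3\in e^{-\jim t}B)\dif\lambda(t)$ via Tonelli and the independence of $X_2,X_3$, and then removes the $e^{-\jim t}$ by translation-invariance of Lebesgue measure --- which is precisely your ``absorbing property'' of Haar measure. You phrase the final step through regular conditional distributions (observing the conditional law is constant in $x$ and integrating out $X_1$), whereas the paper just carries out the corresponding double integral explicitly; these are the same argument in different notation, and your appeal to existence of regular conditional distributions on the compact metric space $\uc$ is sound. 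Your Fourier cross-check is a genuinely different route the paper does not take: the identity $\EE[Y^{m}Z^{n}]=\EE[X_1^{m+n}]\EE[X_2^{m}]\EE[X_3^{n}]$ together with $\EE[X^{k}]=0$ for $k\neq 0$ factors all mixed moments, and Stone--Weierstrass on the torus upgrades this to independence with uniform marginals. The measure-theoretic route is self-contained and elementary; the Fourier route is shorter, purely computational, and extends immediately to products sharing a common factor among more than two variables. Either proof is complete and correct.
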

\begin{proof}
    To see that $Y$ and similarly $Z$ is again uniformly distributed on $(\uc, \boreluc, \mu)$, let $A\in\boreluc$. By definition of the pushforward, Tonelli's theorem and the translation-invariance of the Lebesgue measure we have
    \begin{align}
        \PP(Y\in A) & = \PP(X_1 \cdot X_2 \in A)\nonumber\\
        &= \frac{1}{(2\pi)^2}\int_{[0,2\pi)} \PP(X_2\in e^{-\jim t} A) \dif\lambda(t)\nonumber\\
        &= \frac{1}{(2\pi)^2}\int_{[0,2\pi)} \lambda(\{s\in[0,2\pi) | e^{\jim(s+t)} \in A\}) \dif\lambda(t)\nonumber\\
        &= \frac{1}{(2\pi)^2}\int_{[0,2\pi)} \lambda(\{s\in[0,2\pi) | e^{\jim s} \in A\}) \dif\lambda(t)\nonumber\\
        &= \frac{\mu(A)}{2\pi}.
    \end{align}
    With similar arguments and using the independence of $X_2, X_3$, for $A,B\in\boreluc$ it holds that
    \begin{align}
        &\PP(Y\in A, Z\in B) \nonumber\\
        & = \frac{1}{2\pi} \int_{[0,2\pi)} \PP(X_2\in e^{-\jim t} A, X_3\in e^{-\jim t} B) \dif\lambda(t) \nonumber\\
        & = \frac{1}{2\pi} \int_{[0,2\pi)} \PP(X_2\in e^{-\jim t} A)\, \PP(X_3\in e^{-\jim t} B) \dif\lambda(t) \nonumber\\
        & = \frac{1}{2\pi} \int_{[0,2\pi)} \PP(X_2\in A)\, \PP(X_3\in B) \dif\lambda(t) \nonumber\\
        & =  \PP(X_2\in A)\, \PP(X_3\in B) \nonumber\\
        & = \PP(Y\in A)\, \PP(Z\in B),
    \end{align}
    where in the last step we used that $Y,Z,X_2,X_3$ all follow the same distribution.
\end{proof}

We can apply Lemma \ref{lem:kron_iid_uc} to generate i.i.d. sampling matrices respecting the structure defined in \eqref{eq:MMV}. To see this, notice that every entry of $\Na\check{\wv}_{q}\coloneqq \sqrt{\Na}\check{\vv}_{q}\otimes\sqrt{\Na}\check{\uv}_{q}$ is given by $\Na\check{\wv}_{q}^{(ij)}=[\sqrt{\Na}\check{\vv}_{q}]_{i}\cdot [\sqrt{\Na}\check{\uv}_{q}]_{j}$ for a certain combination of $i$ and $j$, where the $\sqrt{\Na}$ and $\Na$ terms are needed for normalization purposes. If we generate each entry of $\sqrt{\Na}\check{\uv}_{q}$ and $\sqrt{\Na}\check{\vv}_{q}$ $\forall q \in [Q]$ by sampling i.i.d. from the distribution described in Lemma \ref{lem:kron_iid_uc}, each entry of $\Na\check{\wv}_{q}$ has been shown to follow the same distribution. Entries $\check{\wv}_{q}^{(i_{1}j_{1})}$ and $\check{\wv}_{q}^{(i_{2}j_{2})}$ are clearly independent when $i_{1}\neq i_{2}$ and $j_{1}\neq j_{2}$ due to the pairwise independence of $[\check{\vv}_{q}]_{i_{1}}$, $[\check{\vv}_{q}]_{i_{2}}$, $[\check{\uv}_{q}]_{j_{1}}$ and $[\check{\uv}_{q}]_{j_{2}}$. If $i_{1}=i_{2}$ or $j_{1}=j_{2}$, Lemma \ref{lem:kron_iid_uc} guarantees independence. Finally, entries of $\check{\wv}_{q}$ and $\check{\wv}_{q'}$ for $q\neq q'$ are independent due to the pairwise independence of $\check{\vv}_{q}$, $\check{\vv}_{q'}$, $\check{\uv}_{q}$ and $\check{\uv}_{q'}$. 

We therefore consider a design of the sampling matrix $\Am$ where each entry is sampled independently and uniformly from the complex unit circle. Formally, we construct $\Am$ following
\begin{equation}
    \begin{split}
       \PP([\Am]_{i, j}\in A) = \mu(A) \qquad {\rm for}\; i\in[Q],\,j\in[N_{\rm a}^{2}]\\
       \text{ and } A \in \boreluc.
    \end{split}
\end{equation}

    \bibliographystyle{IEEEtran}
    \bibliography{IEEEabrv,bibliography}

\end{document}